\newcommand{\lp}{\left(}
\newcommand{\rp}{\right)}
\newcommand{\lb}{\left[}
\newcommand{\rb}{\right]}
\newcommand{\lbp}{\left\{}
\newcommand{\rbp}{\right\}}
\newcommand{\lba}{\left\lvert}
\newcommand{\rba}{\right\rvert}
\newcommand{\mcal}{\mathcal}
\newcommand{\ra}{\rightarrow}
\newcommand{\eqDef}{\triangleq}
\newcommand{\E}{\mathbb{E}}
\newcommand{\Var}{\mathsf{Var}}
\newcommand{\Ber}{\mathrm{Ber}}
\newtheorem{theorem}{Theorem}
\newtheorem{prop}{Proposition}
\newtheorem{cor}{Corollary}
\newtheorem{lem}{Lemma}
\newtheorem{rem}{Remark}
\newcommand{\citep}{\cite}
\title{Fisher information under local differential privacy}
\author{Leighton Pate Barnes, Wei-Ning Chen, and Ayfer \"Ozg\"ur \\
Stanford University, Stanford, CA 94305\\
 Email: \{lpb, wnchen, aozgur\}@stanford.edu} %\thanks{A conference version of this paper has been submitted to COLT 2020 and is under review. It does not include new contributions such as the blackboard model, high-privacy regime for the sparse Bernoulli model, and detailed regularity conditions.}}
\begin{document}

\maketitle

\begin{abstract}
We develop data processing inequalities that describe how Fisher information from statistical samples can scale with the privacy parameter $\varepsilon$ under local differential privacy constraints. These bounds are valid under general conditions on the distribution of the score of the statistical model, and they elucidate under which conditions the dependence on $\varepsilon$ is linear, quadratic, or exponential. We show how these inequalities imply order optimal lower bounds for private estimation for both the Gaussian location model and discrete distribution estimation for all levels of privacy $\varepsilon>0$. We further apply these inequalities to sparse Bernoulli models and demonstrate privacy mechanisms and estimators with order-matching squared $\ell^2$ error.
\end{abstract}

%\begin{keywords}
%Fisher information, local differential privacy, data processing, statistical estimation
%\end{keywords}

\section{Introduction}
In the model of local differential privacy \cite{warner,dwork1,dwork2,dwork3}, sensitive data is released to an aggregator or centralized processor only after having been processed by a privatization mechanism. This privatization mechanism distorts the data in such a way that it is statistically guaranteed to not reveal too much about the underlying sensitive data. There is an inherent trade-off between the degree to which the data is distorted by the mechanism (and therefore the amount of privacy achieved), and the utility of the data for performing statistical inference and estimation tasks.

One measure of the information conveyed by a statistical sample for estimating a parameter is the so-called Fisher information, which describes how a family of probability distributions changes as one varies the parameter of interest. Under some mild regularity conditions, the Fisher information at a point $\theta$ in the space of possible parameters immediately gives a lower bound on the squared $\ell^2$ risk for estimating $\theta$ via the well-known Cram\'er-Rao bound for unbiased estimators \cite{Cramer,Rao,Cover--Thomas2006,Tsybakov2008}. More generally, Fisher information describes the complexity of estimation problems in an asymptotic sense locally around $\theta$ \cite{Hajek1972local,Vandervaart2000}; and a Bayesian version of the Cram\'er-Rao bound known as the van Trees inequality can be used to give lower bounds that hold for any estimator (including arbitrarily biased estimators) \cite{gill}.

In this paper, we consider the problem of estimating a parameter $\theta\in\mathbb{R}^d$ from $n$ independent statistical samples that have been processed by an $\varepsilon$-locally differentially private mechanism. See Figure \ref{fig1}.
%This task is sufficiently general to encompass many problems of interest such as estimating the mean of a Gaussian or other location model, estimating a discrete distribution, or estimating a smooth density via parametric reduction.
We characterize the Fisher information from the privatized samples, and provide strong data processing inequalities that describe how the Fisher information can scale with the privatization parameter $\varepsilon$. These data processing inequalities are valid under very general conditions on the tail of the score function random variable, and elucidate under which conditions the dependence on $\varepsilon$ is linear, quadratic, or exponential. Using the van Trees inequality, we recover in a unified way order-wise optimal lower bounds on the minimax squared $\ell^2$ risk for Gaussian mean estimation and discrete distribution estimation at all levels of privacy $\varepsilon>0$, matching the lower bounds from \cite{duchi2013local,yebarg2018,duchi2019lower} with simpler and more transparent proofs. Our results also apply to a sequential interaction model for local differential privacy via a straightforward consequence of the chain-rule for Fisher information; and they can even be applied to a fully interactive blackboard model by characterizing the Fisher information from the entire blackboard transcript, therefore extending, for example, earlier bounds in \cite{yebarg2018} to fully interactive models.

We further demonstrate the utility of this framework by developing lower bounds for other statistical models such as a sparse Bernoulli model with $X_i \sim \prod_{j=1}^d\mathsf{Bern}(\theta_j)$ and $\sum_{j=1}^d \theta_j \leq s$, and demonstrate a privatization mechanism with matching error. This model is interesting in that  when $s=1$, it provides an example where the dependence of the minimax squared $\ell^2$ risk on $\varepsilon$ is exponential even if the $d$ components of each sample $X_i$ are independent of each other. This is in contrast to mean estimation for the Gaussian and the dense Bernoulli model (when $s$ is of order $d$), where the dependence on $\varepsilon$ is linear \cite{duchi2019lower}. When $s>1$, the sample-size penalty due to privatization is of the order $\frac{s\log d}{\varepsilon}$ in the privacy regime $\log d \preceq \varepsilon \preceq s\log d$. This penalty scales linearly only with the sparsity $s$ rather than the ambient dimension $d$, which opens up the possibility of private estimation with a more modest penalty provided that the data can be assumed to be sparse in a certain sense.

Our Fisher information approach to lower bounds under privacy constraints is motivated by recent results for statistical estimation under communication constraints such as \cite{allerton,isit,barnes}. In both the privacy constrained and communication constrained cases, the tail behavior of the score function plays a central role in determining how the risk can scale. Other works such as \cite{archayaetal2,archayaetal,duchi2019lower} have also noted the connection between communication and privacy constraints in statistical estimation. In contrast with \cite{ruan}, we analyze the Fisher information from the induced distribution of the privatized samples $Y_1,\ldots, Y_n$, rather than that from the original statistical model of the $X_i$'s. In \cite{ruan}, Ruan and Duchi observe that the latter has limited applicability for capturing the local complexity of private estimation problems (see also \cite{rohde}), while our paper shows that the former is a powerful measure for the same. Indeed, from the local asymptotic minimax point of view, it is natural to expect the Fisher information from the privatized samples to play a role in the complexity of the private estimation problem, but until now it remained unclear how to characterize or bound this Fisher information for any privatization mechanism satisfying the local differential privacy condition.

The main contributions of our paper can be summarized as follows:
\begin{itemize}
\item We introduce a framework for characterizing Fisher information from $\varepsilon$-differentially privatized samples. Under very general conditions on the statistical model, we provide upper bounds on the Fisher information  that show that the dependence on $\varepsilon$ is dictated by  the tail of the score function random variable. These bounds continue to hold even when samples are released in an interactive fashion through a shared blackboard.  Even though statistical estimation under privacy constraints has been of significant recent interest, to the best of our knowledge there are no known bounds on  the Fisher information from privatized samples.
\item We show that the bounds on Fisher information easily lend themselves to order optimal lower bounds on the minimax squared $\ell^2$ risk of statistical estimation under privacy constraints. In particular, we recover in a unified way lower bounds developed separately for different statistical models in the literature, such as Gaussian mean estimation \cite{duchi2019lower} and discrete distribution estimation \cite{yebarg2018}, in the latter case extending the bounds to fully interactive models.
\item To demonstrate the generality of our approach, we apply our bounds to a sparse
Bernoulli model, for which we also develop optimal privacy mechanisms. We show that our framework can be flexibly applied to different parameter regimes of this model and the dependence of the minimax risk on $\varepsilon$ can be exponential or linear depending on the parameter regime of interest.
\end{itemize}

\begin{figure}
\begin{center}
\begin{tikzpicture}
\node at (2, 1.4) {$P_\theta$}; 
\draw [->] (1.8, 1.2) -- (0.2, 0.2); 
\draw [->] (1.9, 1.2) -- (1.1, 0.2); 
\draw [->] (2.1, 1.2) -- (2.9, 0.2); 
\draw [->] (2.2, 1.2) -- (3.8, 0.2); 
\draw (0,0) circle (0.2cm); \node [below] at (0,-0.2) {$X_1$}; 
\draw (1,0) circle (0.2cm); \node [below] at (1,-0.2) {$X_2$}; 
\node at (2,0) {$\cdots$}; 
\draw (3,0) circle (0.2cm); \node [below] at (3,-0.2) {$X_{n-1}$}; 
\draw (4,0) circle (0.2cm); \node [below] at (4,-0.2) {$X_{n}$}; 
\draw [->] (0, -0.7) -- (0, -1.4); \draw [->] (1, -0.7) -- (1, -1.4); 
\draw [->] (3, -0.7) -- (3, -1.4); \draw [->] (4, -0.7) -- (4, -1.4);
\draw (-0.4,-2.5) rectangle (.4,-1.5); \node at (0,-2) {$Q$}; 
\draw (0.6,-2.5) rectangle (1.4,-1.5); \node at (1,-2) {$Q$}; 
\node at (2,-2) {$\cdots$}; 
\draw (2.6,-2.5) rectangle (3.4,-1.5); \node at (3,-2) {$Q$}; 
\draw (3.6,-2.5) rectangle (4.4,-1.5); \node at (4,-2) {$Q$}; 
\draw [->] (0, -2.6) -- (0, -3.2); \node [below] at (0,-3.2) {$Y_1$}; \draw [->] (0, -3.8) -- (0, -4.4);
\draw [->] (1, -2.6) -- (1, -3.2); \node [below] at (1,-3.2) {$Y_2$}; \draw [->] (1, -3.8) -- (1, -4.4);
\draw [->] (3, -2.6) -- (3, -3.2); \node [below] at (3,-3.2) {$Y_{n-1}$}; \draw [->] (3, -3.8) -- (3, -4.4);
\draw [->] (4, -2.6) -- (4, -3.2); \node [below] at (4,-3.2) {$Y_{n}$}; \draw [->] (4, -3.8) -- (4, -4.4);
\draw (-0.3,-5.5) rectangle (4.3,-4.5);  % \node at (2,-3) {transcript $Y$}; 
\node at (2,-5) {centralized processor}; 
\draw [->] (2,-5.5) -- (2, -6); \node [below] at (2,-6) {$\hat{\theta}$}; 
%\node [left] at (0, -1) {$k$ bits}; \node [right] at (4,-1) {$k$ bits}; 
\end{tikzpicture}
\caption{An estimation system where sensitive data $X_1,\ldots X_n$ is processed by the privatization mechanism $Q(y|x)$ before being released to the centralized processor that will use the data for statistical inference tasks such as estimating the parameter $\theta$.}
\label{fig1}
\end{center}
\end{figure}
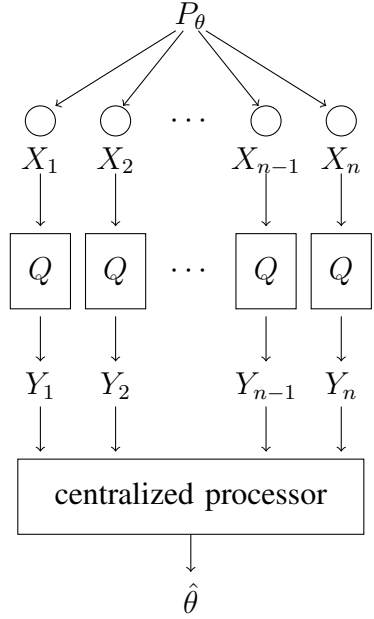 

\subsection{Preliminaries}
Let $(\mathcal{X},\mathcal{A})$ and $(\mathcal{Y},\mathcal{B})$ be measurable spaces and suppose that $\{P_\theta\}_{\theta\in\Theta}$ for $\Theta \subseteq \mathbb{R}^d$ is a family of probability measures on $(\mathcal{X},\mathcal{A})$ that is dominated by some sigma-finite measure $\mu$. Denote the density of $P_\theta$ with respect to $\mu$ by $f(x|\theta)$. Let
$$X_1,\ldots,X_n \overset{\text{i.i.d.}}{\sim} P_\theta$$
with $\theta$ being the parameter of interest that we are trying to estimate. We say that a regular conditional distribution $$Q:\mathcal{B}\times\mathcal{X}\to [0,1]$$
is an $\varepsilon$-differentially private mechanism if
\begin{equation}\frac{Q(S|x)}{Q(S|x')} \leq e^\varepsilon \label{eq:epsDP} \end{equation}
for any $x,x'\in\mathcal{X}$ and $S\in\mathcal{B}$.

Suppose that conditioned on $X_i=x_i$, we draw $Y_i$ independently from $Q(\cdot|x_i)$ where $Q$ satisfies the $\varepsilon$ differentially priviate condition \eqref{eq:epsDP} with $\varepsilon>0$.  The privatization mechanisms can either be the same for each sample, or they can vary across different samples as long as each mechanism satisfies the condition \eqref{eq:epsDP}. We will generally assume for simplicity that each mechanism is the same. In this case the $Y_i$ have the marginal probability distribution
$$Q_\theta(S) = \int Q(S|x)f(x|\theta)d\mu(x) \; .$$
By \eqref{eq:epsDP}, if $Q(S|x)=0$ for some $x\in\mathcal{X}$ then $Q(S|x') = 0$ for all $x'\in\mathcal{X}$. We can therefore assume that both $\{Q(\cdot|x)\}_{x\in\mathcal{X}}$ and $\{Q_\theta\}_{\theta\in\Theta}$ form dominated families with $Q(\cdot|x)<<\nu$ for all $x\in\mathcal{X}$ and $Q_\theta<<\nu$ for all $\theta\in\Theta$ for some sigma-finite measure $\nu$. Abusing notation slightly let $f(y|\theta)$ be the density of $Q_\theta$ with respect to $\nu$, and let $Q(y|x)$ denote the density of $Q(\cdot|x)$ with respect to $\nu$.

Instead of working with measures, we will find it more convenient to work with the corresponding densities of the probability distributions. The following proposition shows that the condition \eqref{eq:epsDP} implies a similar condition for the density $Q(y|x)$, and this is the form that will be most useful in the subsequent sections.

\begin{prop} \label{prop1}
For any $x,x'\in\mathcal{X}$ and $\nu$-almost any $y\in\mathcal{Y}, \; \frac{Q(y|x)}{Q(y|x')} \leq e^\varepsilon \; .$
\end{prop}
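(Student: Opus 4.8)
The plan is to convert the set-wise privacy guarantee \eqref{eq:epsDP} into the pointwise density bound by a standard comparison-of-measures argument, fixing the pair $x, x'$ throughout and allowing the exceptional $\nu$-null set to depend on that pair. First I would record that, because $Q(\cdot|x) \ll \nu$ and $Q(\cdot|x') \ll \nu$, the densities $Q(y|x)$ and $Q(y|x')$ are the corresponding Radon--Nikodym derivatives, so \eqref{eq:epsDP} is equivalent to the statement that for every $S \in \mathcal{B}$,
$$\int_S Q(y|x)\, d\nu(y) = Q(S|x) \leq e^\varepsilon Q(S|x') = \int_S e^\varepsilon Q(y|x')\, d\nu(y),$$
i.e. the function $g(y) \eqDef Q(y|x) - e^\varepsilon Q(y|x')$ integrates to a non-positive value over every measurable set.

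The key step is to test this against the candidate violating set. I would define $A \eqDef \{\, y \in \mathcal{Y} : g(y) > 0 \,\}$, which is $\mathcal{B}$-measurable since $g$ is a difference of measurable densities (finite $\nu$-almost everywhere, being densities of probability measures). If $\nu(A) > 0$, then $g$ is strictly positive on $A$ and hence $\int_A g\, d\nu > 0$, giving $Q(A|x) > e^\varepsilon Q(A|x')$ and contradicting \eqref{eq:epsDP} with $S = A$. Therefore $\nu(A) = 0$, which is precisely the assertion that $Q(y|x) \leq e^\varepsilon Q(y|x')$ for $\nu$-almost every $y$; since $x, x'$ were arbitrary, this proves the proposition.

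The hard part, such as it is, is purely the measure-theoretic bookkeeping at the contradiction step: I must ensure the inequality $\int_A g\, d\nu > 0$ is \emph{strict}, which requires both $\nu(A) > 0$ and the strict positivity of $g$ on $A$ (a merely non-negative integrand would only give $\geq 0$ and would not contradict the hypothesis). No deeper difficulty arises, because the statement quantifies the null set after fixing $(x,x')$. A uniform-in-$(x,x')$ version would instead demand joint measurability of $Q(y|x)$ together with a separability argument to take a countable union of null sets, but the stated ``for any $x,x'$ and $\nu$-almost any $y$'' does not require this.
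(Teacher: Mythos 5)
Your proof is correct, and while it shares the paper's high-level strategy (test the set-wise privacy condition against the set where the pointwise bound fails), your execution is genuinely cleaner and differs in two substantive ways. The paper fixes a slack $\delta>0$, considers the set where $Q(y|x)/Q(y|x') > e^\varepsilon + \delta$, derives a contradiction by lower-bounding the ratio of integrals by the infimum of the pointwise ratio, and then must take $\delta\to 0$ and invoke a countable-union-of-null-sets argument to remove the slack. Your argument works with the signed difference $g(y) = Q(y|x) - e^\varepsilon Q(y|x')$ rather than the ratio: the condition \eqref{eq:epsDP} becomes $\int_S g\,d\nu \le 0$ for every $S\in\mathcal{B}$, and testing $S = \{g>0\}$ gives the conclusion in one step, because a strictly positive function on a set of positive $\nu$-measure has strictly positive integral (via $\{g>0\} = \bigcup_n \{g>1/n\}$, which you correctly flag as the one point needing care). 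This buys you two things: no $\delta$-limiting step is needed at all, and you avoid the division that the paper's ratio manipulation implicitly performs — the paper's bound $\frac{\int_S Q(y|x)d\nu}{\int_S Q(y|x')d\nu} \geq \inf_{y\in S}\frac{Q(y|x)}{Q(y|x')}$ tacitly requires $Q(S|x')>0$, whereas your difference formulation handles the degenerate case $Q(S|x')=0$ (which \eqref{eq:epsDP} forces to imply $Q(S|x)=0$) without any special pleading. Your closing remark about the null set being allowed to depend on the pair $(x,x')$ correctly identifies why no joint-measurability or separability hypothesis is needed for the statement as quantified.
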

\noindent A proof of Proposition \ref{prop1} is included in Appendix \ref{app1}.

\section{Upper Bounds on Fisher Information}
In this section, we introduce the relevant Fisher information quantities and then show how the Fisher information from the privatized samples $Y_i$ can be upper bounded bounded in terms of the local differential privacy parameter $\varepsilon$. Recall that in the context of Fisher information, the score function associated with the statistical model $P_\theta$ is defined by
\begin{align*}
S_\theta(x) & = \nabla_\theta \log f(x|\theta) \\
& = \left(\frac{\partial}{\partial\theta_1} \log f(x|\theta),\ldots,\frac{\partial}{\partial\theta_d} \log f(x|\theta)\right)^T \; ,
\end{align*}
and the Fisher information matrix for estimating $\theta$ from a sample $Y$ is
$$I_Y(\theta) = \mathbb{E}\left[\left(\nabla_\theta\log f(Y|\theta)\right)\left( \nabla_\theta\log f(Y|\theta)\right)^T\right]$$
where the expectation is understood to be taken with respect to the ``true'' distribution with parameter $\theta$. All logs are taken with respect to the natural base. In order to ensure that these quantities are well-defined, and that we can apply the van Trees inequality below, we require certain regularity conditions on the statistical model $P_\theta$. In particular we assume that the square-root densities $\sqrt{f(x|\theta)}$ are continuously differentiable with respect to each $\theta_j$ and that the Fisher information from $X$ exists and is finite. For more on these conditions and how they are used see Appendix \ref{app:reg}.

We are interested in this Fisher information quantity, in part, because it can provide bounds on the risk in estimating $\theta$ from the samples $Y_1,\ldots,Y_n$. Fisher information is by definition a local quantity that is defined at each $\theta\in\Theta$ and describes the local complexity of estimating that particular $\theta$ value asymptotically as the number of samples $n$ increases. More concretely, if the statistical model $Q_\theta$ is \emph{differentiable in quadratic mean} \footnote{is implied by the assumptions we have made above without any additional assumptions on $Q(y|x)$ other than it being an $\varepsilon$ differentially private mechanism} \cite{Vandervaart2000}, meaning that $\sqrt{f(y|\theta)}$ has a derivative in a certain $L^2$ sense, then the local asymptotic risk around $\theta$ is lower bounded as follows:
\begin{align*}
\sup_A\liminf_{n\to\infty}\sup_{h\in A}\mathbb{E}_{\theta+h/\sqrt{n}}\left\|\sqrt{n}\left(\hat\theta_n(Y_1,\ldots,Y_n) - \left(\theta+\frac{h}{\sqrt{n}}\right)\right)\right\|_2^2 & \geq \mathsf{Tr}(I_Y(\theta)^{-1}) \\
& \geq \frac{d^2}{\mathsf{Tr}(I_Y(\theta))}
\end{align*}
where $A$ is any finite subset of $\mathbb{R}^d$ (e.g. see Theorem 8.11 in \cite{Vandervaart2000}). Similarly, under these conditions, for each $\theta$, there exists a sequence of estimators $\hat\theta_n(Y_1,\ldots,Y_n)$ such that
$$ \sup_A\limsup_{n\to\infty}\sup_{h\in I}\mathbb{E}_{\theta+h/\sqrt{n}}\left\|\sqrt{n}\left(\hat\theta_n(Y_1,\ldots,Y_n) - \left(\theta+\frac{h}{\sqrt{n}}\right)\right)\right\|_2^2  \leq \mathsf{Tr}(I_Y(\theta)^{-1})$$
(e.g. see Theorem 8.14 in \cite{Vandervaart2000}). In this way the Fisher information can determine both upper and lower bounds and is of fundamental importance in the parameter estimation problem. In the sequel, we develop upper bounds on $\mathsf{Tr}(I_Y(\theta))$, which immediately lead to lower bounds on local asymptotic risk. Additionally, by upper bounding Fisher information uniformly across all $\theta\in\Theta$ (or a subset of $\Theta$), and using a Bayesian Cram\'er-Rao bound, we are able to get more global minimax lower bounds. For this we'll use a multivariate version of the van Trees inequality due to Gill and Levit \cite{gill}, which bounds the average $\ell^2$ risk by
\begin{align} \label{eq:vt}
\int_\Theta \mathbb{E}\|\hat\theta-\theta\|_2^2\lambda(\theta)d\theta \geq \frac{d^2}{\int_\Theta\text{Tr}(I_{Y_1,\ldots,Y_n}(\theta))\lambda(\theta)d\theta + J(\lambda)}
\end{align}
where $\lambda(\theta) = \prod_{j=1}^d\lambda_j(\theta_j)$ is a prior for the parameter $\theta$ and $J(\lambda)$ is the Fisher information associated with the prior $\lambda$:
$$J(\lambda) = \sum_{j=1}^d \int \frac{\lambda_j'(\theta_j)^2}{\lambda_j(\theta_j)}d\theta_j  \; .$$
Assuming that $\Theta = [-B,B]^d$, the prior $\lambda$ can be chosen to minimize $J(\lambda)$ \cite{Tsybakov2008,borovkov}. This observation along with the independence of the $Y_i$, and upper bounding the average risk by the maximum risk, leads to
\begin{align} \label{eq:vt2}
\sup_{\theta\in\Theta} \mathbb{E}_\theta\|\hat\theta-\theta\|_2^2 \geq \frac{d^2}{n\sup_{\theta\in\Theta}\mathsf{Tr}(I_{Y}(\theta)) + \frac{d\pi^2}{B^2}} \; .
\end{align}
 We are therefore interested in upper bounding $\mathsf{Tr}(I_Y(\theta))$. To this end, we will need the following lemma:
\begin{lem}[Barnes et. al 2018 \cite{allerton}] \label{lem1}
The trace of the Fisher information matrix $I_Y(\theta)$ can be written as
$$\mathsf{Tr}(I_Y(\theta)) = \mathbb{E}_Y\|\mathbb{E}_X[S_\theta(X)|Y]\|_2^2 \; .$$
\end{lem}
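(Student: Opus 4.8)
The plan is to reduce the statement to a single clean identity: the score of the privatized marginal $Q_\theta$ equals the conditional expectation of the original score $S_\theta(X)$ given $Y$. Once this is in hand, the lemma follows immediately, since the trace of the outer-product matrix $\left(\nabla_\theta\log f(Y|\theta)\right)\left(\nabla_\theta\log f(Y|\theta)\right)^T$ is just the squared Euclidean norm of its generating vector. Thus as a first step I would rewrite the defining expression as $\mathsf{Tr}(I_Y(\theta)) = \mathbb{E}_Y\|\nabla_\theta\log f(Y|\theta)\|_2^2$, and the whole task becomes identifying the privatized score $\nabla_\theta\log f(Y|\theta)$.

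To identify that score, I would start from the marginal density $f(y|\theta) = \int Q(y|x)\,f(x|\theta)\,d\mu(x)$ and differentiate in $\theta$. Since the mechanism $Q(y|x)$ carries no dependence on $\theta$, differentiating under the integral sign and using $\nabla_\theta f(x|\theta) = f(x|\theta)\,S_\theta(x)$ gives
\begin{equation*}
\nabla_\theta f(y|\theta) = \int Q(y|x)\,f(x|\theta)\,S_\theta(x)\,d\mu(x) \; .
\end{equation*}
Dividing through by $f(y|\theta)$ to form the score, and recognizing the posterior density $f(x|y,\theta) = Q(y|x)\,f(x|\theta)/f(y|\theta)$ (a bona fide probability density in $x$ for $\nu$-almost every $y$, since $f(y|\theta)$ is the corresponding normalizer), the right-hand side becomes exactly
\begin{equation*}
\nabla_\theta\log f(y|\theta) = \int S_\theta(x)\,f(x|y,\theta)\,d\mu(x) = \mathbb{E}_X[S_\theta(X)|Y=y] \; .
\end{equation*}
This is the desired identity, and it is the conceptual heart of the proof.

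Finally, substituting $\nabla_\theta\log f(Y|\theta) = \mathbb{E}_X[S_\theta(X)|Y]$ back into the expression from the first step yields $\mathsf{Tr}(I_Y(\theta)) = \mathbb{E}_Y\|\mathbb{E}_X[S_\theta(X)|Y]\|_2^2$, which is the claim.

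The hard part will be rigorously justifying the interchange of differentiation and integration when computing $\nabla_\theta f(y|\theta)$; this is where the regularity conditions invoked in the paper (continuous differentiability of the square-root densities $\sqrt{f(x|\theta)}$ and finiteness of the Fisher information from $X$, as discussed in Appendix~\ref{app:reg}) do the real work, typically via a dominated-convergence argument. Everything else is bookkeeping with Bayes' rule and the elementary fact that $\mathsf{Tr}(vv^T) = \|v\|_2^2$.
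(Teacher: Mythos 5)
Your proposal is correct and follows essentially the same route as the paper's own proof: both differentiate the marginal $f(y|\theta)=\int Q(y|x)f(x|\theta)\,d\mu(x)$ under the integral sign, recognize $Q(y|x)f(x|\theta)/f(y|\theta)$ as the conditional density of $X$ given $Y=y$ so that the privatized score equals $\mathbb{E}_X[S_\theta(X)|Y]$, and then take the squared norm and expectation over $Y$ (the paper merely writes this component-by-component rather than in vector form). You also correctly identify that the only delicate point is the interchange of differentiation and integration, which the paper handles exactly as you anticipate, via a dominated-convergence argument under the stated regularity conditions in Appendix~\ref{app:reg}.
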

For completeness a proof of Lemma \ref{lem1} is included in Appendix \ref{app2}. Using the characterization of Fisher information from Lemma \ref{lem1}, the following Propositions \ref{prop2}-\ref{prop5} show how in the differentially private setting, $\mathsf{Tr}(I_Y(\theta))$ can be upper bounded under various assumptions on the tail of the score function random vector $S_\theta(X)$. We see that depending on the tail behavior, there can be a qualitatively different upper bound in terms of $\varepsilon$, i.e., it can be linear, quadratic, or exponential in $\varepsilon$. In Figure \ref{fig2} we summarize these conditions and the corresponding upper bounds.
\begin{prop} \label{prop2}
If $\mathbb{E}[\langle u,S_\theta(X)\rangle^2] \leq I_0$ for any unit vector $u\in\mathbb{R}^d$, then $$\mathsf{Tr}(I_Y(\theta)) \leq I_o (e^\varepsilon-1)^2 \; .$$
\end{prop}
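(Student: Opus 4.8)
The plan is to bound, pointwise in $y$, the conditional expectation appearing inside the identity of Lemma~\ref{lem1}. Writing $g(y) := \mathbb{E}_X[S_\theta(X)\mid Y=y]$, Lemma~\ref{lem1} gives $\mathsf{Tr}(I_Y(\theta)) = \mathbb{E}_Y\|g(Y)\|_2^2$, so it suffices to show $\|g(y)\|_2 \le \sqrt{I_0}\,(e^\varepsilon-1)$ for $\nu$-almost every $y$. Using Bayes' rule, the conditional density of $X$ given $Y=y$ is $f(x|\theta)Q(y|x)/f(y|\theta)$, so
$$g(y) = \int S_\theta(x)\,\frac{Q(y|x)}{f(y|\theta)}\,f(x|\theta)\,d\mu(x).$$

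The first key move is to use that the score is centered, $\int S_\theta(x) f(x|\theta)\,d\mu(x) = \mathbb{E}_X[S_\theta(X)] = 0$, which holds under the stated regularity conditions that permit differentiating $\int f(x|\theta)\,d\mu(x)=1$ under the integral sign. Since also $\int \frac{Q(y|x)}{f(y|\theta)} f(x|\theta)\,d\mu(x) = 1$, I can recenter the likelihood ratio at $1$ without changing $g(y)$:
$$g(y) = \int S_\theta(x)\left(\frac{Q(y|x)}{f(y|\theta)} - 1\right) f(x|\theta)\,d\mu(x).$$
Taking $u_y = g(y)/\|g(y)\|_2$ (the case $g(y)=0$ being trivial), so that $\|g(y)\|_2 = \langle u_y, g(y)\rangle$, and applying Cauchy--Schwarz with respect to the measure $f(x|\theta)\,d\mu(x)$, I would obtain
$$\|g(y)\|_2 \le \left(\mathbb{E}_X[\langle u_y, S_\theta(X)\rangle^2]\right)^{1/2}\left(\mathbb{E}_X\left[\left(\frac{Q(y|X)}{f(y|\theta)} - 1\right)^2\right]\right)^{1/2}.$$
The first factor is at most $\sqrt{I_0}$ by the hypothesis applied to the unit vector $u_y$.

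The privacy constraint enters only through the second factor. By Proposition~\ref{prop1}, $Q(y|x) \le e^\varepsilon Q(y|x')$ and $Q(y|x) \ge e^{-\varepsilon}Q(y|x')$ for all $x'$; since $f(y|\theta) = \int Q(y|x')f(x'|\theta)\,d\mu(x')$ is an $f(\cdot|\theta)$-average of the $Q(y|x')$, these bounds force $e^{-\varepsilon} \le Q(y|x)/f(y|\theta) \le e^\varepsilon$. Hence $\left|\frac{Q(y|x)}{f(y|\theta)} - 1\right| \le e^\varepsilon - 1$, using $1-e^{-\varepsilon} \le e^\varepsilon-1$, so the second factor is at most $e^\varepsilon-1$. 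Combining the two factors gives $\|g(y)\|_2 \le \sqrt{I_0}\,(e^\varepsilon-1)$, and integrating $\|g(y)\|_2^2$ against the law of $Y$ yields $\mathsf{Tr}(I_Y(\theta)) \le I_0(e^\varepsilon-1)^2$.

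I expect the main obstacle to be obtaining the sharp $(e^\varepsilon-1)$ factor rather than a crude $e^\varepsilon$: bounding $Q(y|x)/f(y|\theta) \le e^\varepsilon$ directly, without recentering, yields only $\mathsf{Tr}(I_Y(\theta)) \le I_0 e^{2\varepsilon}$, which is far too weak for small $\varepsilon$. The decisive structural fact is that the mean-zero property of the score lets me subtract $1$ from the likelihood ratio, so that the privacy constraint contributes through the \emph{deviation} of $Q(y|x)/f(y|\theta)$ from $1$ rather than through its magnitude, giving the correct behavior that is linear in $\varepsilon$ as $\varepsilon\to 0$.
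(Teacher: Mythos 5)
Your proof is correct, and it takes a genuinely different route from the paper's. Both arguments start from Lemma~\ref{lem1}, choose the unit vector $u_y$ along $\mathbb{E}_X[S_\theta(X)\mid Y=y]$, and hinge on the mean-zero property of the score to upgrade a crude $e^\varepsilon$ factor to $e^\varepsilon-1$; but the way that property is deployed differs. The paper splits the $x$-integral by the sign of $\langle u_y, S_\theta(x)\rangle$, bounds $Q(y|x)$ above by $\max_x Q(y|x)$ on the positive part and below by $\min_x Q(y|x)$ on the negative part, and uses the cancellation $\int_{\{\langle u,S\rangle\geq 0\}} + \int_{\{\langle u,S\rangle<0\}} = 0$ together with $\max_x Q(y|x) \leq e^\varepsilon \min_x Q(y|x)$ to leave a factor $(e^\varepsilon-1)$ times the positive-part integral, finishing with Jensen's inequality. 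You instead recenter the likelihood ratio: since the score is mean zero you may replace $Q(y|x)/f(y|\theta)$ by $Q(y|x)/f(y|\theta)-1$ inside the integral, observe that $f(y|\theta)$ is a mixture of the $Q(y|x')$ so that the ratio lies in $[e^{-\varepsilon}, e^\varepsilon]$ and hence the centered ratio is bounded by $e^\varepsilon-1$ in absolute value, and then apply Cauchy--Schwarz with respect to $f(x|\theta)\,d\mu(x)$. Your version is arguably cleaner and more modular: the sign-splitting plus Jensen step is replaced by a single Cauchy--Schwarz, the bound on $\|\mathbb{E}_X[S_\theta(X)\mid Y=y]\|_2$ is pointwise in $y$, and it isolates transparently where privacy enters --- through the second moment of the deviation of the likelihood ratio from $1$ (a $\chi^2$-type quantity), which is exactly the structural reason the bound is quadratic in $\varepsilon$ for small $\varepsilon$. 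What the paper's one-sided decomposition buys in exchange is that it tracks only the positive part of the score, which (had constants been tracked) yields the slightly sharper $\tfrac{1}{4}I_0(e^\varepsilon-1)^2$ via $\mathbb{E}[\langle u,S\rangle \mathbbm{1}_{\{\langle u,S\rangle\geq 0\}}]=\tfrac{1}{2}\mathbb{E}|\langle u,S\rangle|$; your route can be similarly tightened to $I_0(1-e^{-\varepsilon})(e^\varepsilon-1)$ by using that the ratio has mean $1$, so neither is intrinsically lossier at the stated order. One minor caveat: your two-sided bound $e^{-\varepsilon}\leq Q(y|x)/f(y|\theta)\leq e^\varepsilon$ involves exchanging a ``for all $x,x'$, $\nu$-a.e.\ $y$'' statement into a ``$\nu$-a.e.\ $y$, $\mu$-a.e.\ $x$'' statement before integrating over $x'$, but this Fubini-type bookkeeping is at the same level of informality as the paper's own use of $\min_x Q(y|x)$ and $\max_x Q(y|x)$, so it is not a gap relative to the paper's standard of rigor.
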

\begin{proof}
Using Lemma \ref{lem1},
\begin{equation} \label{lems}
\mathsf{Tr}(I_Y(\theta)) = \mathbb{E}_Y\|\mathbb{E}_X[S_\theta(X)|Y]\|_2^2 \; .
\end{equation}
For a fixed $y$ let $$u = \frac{\mathbb{E}[S_\theta(X)|Y=y]}{\|\mathbb{E}[S_\theta(X)|Y=y]\|_2}$$
so that
\begin{align}
\|\mathbb{E}[S_\theta(X)|Y=y]\|_2 & = \langle u, \mathbb{E}[S_\theta(X)|Y=y] \rangle \nonumber \\
& = \mathbb{E}[\langle u, S_\theta(X) \rangle |Y=y] \nonumber \\
& = \frac{1}{f(y|\theta)}\mathbb{E}[\langle u, S_\theta(X) \rangle Q(y|X)] \; . \label{eq1}
\end{align}
Let $c_\text{min}(y) = \min_x Q(y|x)$ and $c_\text{max}(y) = \max_x Q(y|x).$ We can assume that $c_\text{min}(y) > 0$. Following \eqref{eq1},
\begin{align} \label{eq2}
\frac{1}{f(y|\theta)}\mathbb{E}[\langle u, S_\theta(X) \rangle Q(y|X)] & \leq \frac{1}{c_\text{min}(y)}\bigg(\int_{\{x:\langle u,S_\theta(x) \rangle \geq 0\}} \langle u,S_\theta(x) \rangle Q(y|x) f(x|\theta) d\mu(x) \nonumber\\
 & + \int_{\{x:\langle u,S_\theta(x) \rangle < 0\}} \langle u,S_\theta(x) \rangle Q(y|x) f(x|\theta) d\mu(x) \bigg) \nonumber\\
 & \leq \frac{1}{c_\text{min}(y)}\bigg(c_\text{max}(y)\int_{\{x:\langle u,S_\theta(x) \rangle \geq 0\}} \langle u,S_\theta(x) \rangle f(x|\theta) d\mu(x) \nonumber\\
 & + c_\text{min}(y)\int_{\{x:\langle u,S_\theta(x) \rangle < 0\}} \langle u,S_\theta(x) \rangle f(x|\theta) d\mu(x) \bigg) \; .
\end{align}
Note that score functions are mean zero and thus
\begin{equation} \label{eq3}
\int_{\{x:\langle u,S_\theta(x) \rangle \geq 0\}} \langle u,S_\theta(x) \rangle  f(x|\theta) d\mu(x) + \int_{\{x:\langle u,S_\theta(x) \rangle < 0\}} \langle u,S_\theta(x) \rangle f(x|\theta) d\mu(x) = 0 \; .
\end{equation}
Putting \eqref{eq2} together with \eqref{eq3},
\begin{equation*}
\frac{1}{f(y|\theta)}\mathbb{E}[\langle u, S_\theta(X) \rangle Q(y|X)] \leq (e^\varepsilon - 1)\int_{\{x:\langle u,S_\theta(x) \rangle \geq 0\}} \langle u,S_\theta(x) \rangle  f(x|\theta) d\mu(x) \; ,
\end{equation*}
and then squaring both sides and using Jensen's inequality yields
\begin{equation} \label{eq4}
\left(\frac{1}{f(y|\theta)}\mathbb{E}[\langle u, S_\theta(X) \rangle Q(y|X)]\right)^2 \leq (e^\varepsilon - 1)^2I_0 \; .
\end{equation}
The proposition is proved by combining \eqref{lems}, \eqref{eq1}, and \eqref{eq4}.
\end{proof}
\begin{rem}
If $0<\varepsilon<1$ then $(e^\varepsilon-1)^2 \leq (e-1)^2\varepsilon^2$ and Proposition \ref{prop2} implies
$$\mathsf{Tr}(I_Y(\theta)) \leq (e-1)^2 I_o \varepsilon^2 \; . $$
\end{rem}
\begin{prop} \label{prop3}
If $\mathbb{E}[\langle u,S_\theta(X)\rangle^2] \leq I_0$ for any unit vector $u\in\mathbb{R}^d$ then $$\mathsf{Tr}(I_Y(\theta)) \leq I_o e^\varepsilon \; .$$
\end{prop}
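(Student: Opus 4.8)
The plan is to reuse the opening moves from the proof of Proposition \ref{prop2}, but to replace the subtractive estimate there (which exploits the mean-zero property of the score to produce the factor $e^\varepsilon-1$) with a single application of Cauchy--Schwarz that peels off one copy of $Q(y|X)$ and yields instead the factor $e^\varepsilon$. Concretely, I would begin from Lemma \ref{lem1} and, for each fixed $y$, introduce the unit vector $u_y = \mathbb{E}[S_\theta(X)|Y=y]/\|\mathbb{E}[S_\theta(X)|Y=y]\|_2$ exactly as in \eqref{eq1}, so that
\begin{equation*}
\|\mathbb{E}[S_\theta(X)|Y=y]\|_2 = \frac{1}{f(y|\theta)}\mathbb{E}\left[\langle u_y, S_\theta(X)\rangle\, Q(y|X)\right] \; .
\end{equation*}

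The key step is then to write $Q(y|X) = \sqrt{Q(y|X)}\cdot\sqrt{Q(y|X)}$ and apply Cauchy--Schwarz under the expectation, giving
\begin{equation*}
\left(\mathbb{E}\left[\langle u_y, S_\theta(X)\rangle\, Q(y|X)\right]\right)^2 \leq \mathbb{E}\left[\langle u_y, S_\theta(X)\rangle^2\, Q(y|X)\right]\cdot \mathbb{E}\left[Q(y|X)\right] \; ,
\end{equation*}
and since $\mathbb{E}[Q(y|X)] = f(y|\theta)$ this rearranges to
\begin{equation*}
\frac{1}{f(y|\theta)}\left(\mathbb{E}\left[\langle u_y, S_\theta(X)\rangle\, Q(y|X)\right]\right)^2 \leq \mathbb{E}\left[\langle u_y, S_\theta(X)\rangle^2\, Q(y|X)\right] \; .
\end{equation*}
Into the surviving factor of $Q(y|X)$ I would insert the privacy bound in its pointwise multiplicative form $Q(y|x)\leq e^\varepsilon c_\text{min}(y)$ (valid since $Q(y|x)\leq e^\varepsilon Q(y|x')$ for all $x'$ by Proposition \ref{prop1}), together with the hypothesis $\mathbb{E}[\langle u_y,S_\theta(X)\rangle^2]\leq I_0$ applied to the fixed direction $u_y$, to obtain
\begin{equation*}
\mathbb{E}\left[\langle u_y, S_\theta(X)\rangle^2\, Q(y|X)\right] \leq e^\varepsilon c_\text{min}(y)\, \mathbb{E}\left[\langle u_y, S_\theta(X)\rangle^2\right] \leq e^\varepsilon c_\text{min}(y)\, I_0 \; .
\end{equation*}

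Finally I would combine these with the identity $\mathsf{Tr}(I_Y(\theta)) = \int \frac{1}{f(y|\theta)}\left(\mathbb{E}[\langle u_y, S_\theta(X)\rangle Q(y|X)]\right)^2 d\nu(y)$, which follows by combining Lemma \ref{lem1} with the representation above, and integrate the chain of inequalities over $y$. This leaves the quantity $\int c_\text{min}(y)\, d\nu(y)$, which is at most $1$ because $c_\text{min}(y)=\min_x Q(y|x)\leq Q(y|x_0)$ for any fixed $x_0$ and $\int Q(y|x_0)\, d\nu(y)=1$. Putting everything together yields $\mathsf{Tr}(I_Y(\theta))\leq e^\varepsilon I_0$, as claimed.

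The individual steps are routine; the one genuine design choice — and hence the ``obstacle'' to watch — is how to deploy the privacy constraint. Using it in subtractive form (as in Proposition \ref{prop2}) forces a factor of $(e^\varepsilon-1)^2$, whereas splitting off a single copy of $Q(y|X)$ via Cauchy--Schwarz and bounding the remaining copy multiplicatively by $e^\varepsilon c_\text{min}(y)$ is exactly what produces the linear-in-$e^\varepsilon$ dependence; notably, unlike Proposition \ref{prop2}, this argument never uses that the score is mean zero. I would also double-check the measure-theoretic point that $\int c_\text{min}(y)\, d\nu(y)\le 1$ is legitimate (i.e.\ that $c_\text{min}$ is measurable and the estimate survives for $\nu$-almost every $y$), since this is the place where the two bounds genuinely diverge.
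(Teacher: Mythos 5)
Your proof is correct and is essentially the paper's own argument: your Cauchy--Schwarz step (peeling one copy of $Q(y|X)$ off the square) is exactly the paper's application of Jensen's inequality to the tilted density $Q(y|X)/\mathbb{E}[Q(y|X)]$, and, like the paper, you correctly avoid any use of the mean-zero property of the score. The only cosmetic difference is in how the privacy constraint is deployed afterwards: the paper uses $Q(y|x)\leq e^\varepsilon\,\mathbb{E}[Q(y|X)]=e^\varepsilon f(y|\theta)$, which makes the per-$y$ bound $I_0e^\varepsilon$ uniform so that averaging over $Y$ is immediate, whereas you compare $Q(y|x)$ to $e^\varepsilon c_\text{min}(y)$ and then need the additional (valid) observation that $\int c_\text{min}(y)\,d\nu(y)\leq 1$.
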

\begin{proof}
Following \eqref{eq1} from the proof above, Jensen's inequality implies
\begin{align*}
\mathbb{E}\left[\langle u,S_\theta(X)\rangle \frac{Q(y|X)}{\mathbb{E}[Q(y|X)]}\right]^2 & \leq \mathbb{E}\left[\langle u,S_\theta(X)\rangle^2\frac{Q(y|X)}{\mathbb{E}[Q(y|X)]}\right] \\
& \leq I_0e^\varepsilon \; .
\end{align*}
\end{proof}
Using the super-exponential definition of a sub-Gaussian and sub-exponential random variable \cite{versh}, we say that a random vector $V\in\mathbb{R}^d$ is sub-Gaussian with parameter $\sigma$ if $\mathbb{E}\left[e^{\left(\frac{\langle u,V \rangle}{\sigma}\right)^2}\right] \leq 2$ for any unit vector $u\in\mathbb{R}^d$. We say that a random vector $V\in\mathbb{R}^d$ is sub-exponential with parameter $\sigma$ if $\mathbb{E}\left[e^{\left|\frac{\langle u,V \rangle}{\sigma}\right|}\right] \leq 2$ for any unit vector $u\in\mathbb{R}^d$.
\begin{prop} \label{prop4}
If $S_\theta(X)$ is sub-Gaussian with parameter $\sigma$ and $\varepsilon\geq 1$ then $$\mathsf{Tr}(I_Y(\theta)) \leq 2\sigma^2\varepsilon \; .$$
\end{prop}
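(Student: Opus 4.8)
The plan is to leverage the representation already derived in the proof of Proposition \ref{prop2}. For a fixed $y$, set $u = \mathbb{E}[S_\theta(X)|Y=y]/\|\mathbb{E}[S_\theta(X)|Y=y]\|_2$ and write, as in \eqref{eq1},
$$\|\mathbb{E}[S_\theta(X)|Y=y]\|_2 = \mathbb{E}\left[\langle u, S_\theta(X)\rangle \frac{Q(y|X)}{\mathbb{E}[Q(y|X)]}\right].$$
Abbreviate $W = \langle u, S_\theta(X)\rangle$ and $L = Q(y|X)/\mathbb{E}[Q(y|X)]$. Then $W$ is mean-zero and, since $u$ is a unit vector, sub-Gaussian with parameter $\sigma$; while Proposition \ref{prop1} gives $L \le e^\varepsilon$ together with $\mathbb{E}[L]=1$ and $L\ge 0$. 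The goal reduces to showing $\mathbb{E}[WL]^2 \le 2\sigma^2\varepsilon$ uniformly in $y$, since the sub-Gaussian parameter is $y$-independent and Lemma \ref{lem1} then yields the claim after integrating over $Y$.

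The key idea is to treat $L\,dP_\theta$ as a change of measure and apply a variational (Donsker--Varadhan) bound. Because $\log L \le \varepsilon$ and $L\ge 0$, the relative entropy of the tilted law satisfies $\mathbb{E}[L\log L]\le \varepsilon\,\mathbb{E}[L]=\varepsilon$. Hence for every $\lambda>0$,
$$\lambda\,\mathbb{E}[WL] \le \log\mathbb{E}\left[e^{\lambda W}\right] + \mathbb{E}[L\log L] \le \log\mathbb{E}\left[e^{\lambda W}\right] + \varepsilon.$$
It remains to control the moment generating function of $W$ from the sub-Gaussian hypothesis. Here I would use Young's inequality $\lambda W \le \tfrac{\lambda^2\sigma^2}{4} + \tfrac{W^2}{\sigma^2}$, which gives $\mathbb{E}[e^{\lambda W}] \le e^{\lambda^2\sigma^2/4}\,\mathbb{E}[e^{W^2/\sigma^2}] \le 2\,e^{\lambda^2\sigma^2/4}$ directly from the definition $\mathbb{E}[e^{(W/\sigma)^2}]\le 2$.

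Substituting this bound and dividing by $\lambda$ leaves $\mathbb{E}[WL] \le \tfrac{\lambda\sigma^2}{4} + \tfrac{\varepsilon+\log 2}{\lambda}$; optimizing over $\lambda$ (taking $\lambda = 2\sqrt{\varepsilon+\log 2}/\sigma$) gives $\mathbb{E}[WL] \le \sigma\sqrt{\varepsilon+\log 2}$, so $\mathbb{E}[WL]^2 \le \sigma^2(\varepsilon+\log 2)$. This is exactly where the hypothesis $\varepsilon\ge 1$ enters: since $\log 2 < 1 \le \varepsilon$, we obtain $\sigma^2(\varepsilon+\log 2)\le 2\sigma^2\varepsilon$, completing the bound. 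The main obstacle I anticipate is pinning down the correct MGF constant from the paper's normalization of sub-Gaussianity — a looser constant would degrade the final factor of $2$ — and checking that the variational bound applies (finiteness of the MGF, which the sub-Gaussian assumption guarantees); the role of $\varepsilon\ge1$ is precisely to absorb the additive $\log 2$ produced by this normalization.
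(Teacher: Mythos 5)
Your proof is correct, and it reaches the bound by a genuinely different route than the paper. The paper's proof is a single application of Jensen's inequality: since $L\,dP_\theta$ (with $L = Q(y|X)/\mathbb{E}[Q(y|X)]$) is a probability measure, convexity of $x\mapsto e^{x^2}$ gives
$\exp\bigl(\mathbb{E}[WL]^2/\sigma^2\bigr) \leq \mathbb{E}\bigl[e^{(W/\sigma)^2}L\bigr] \leq e^\varepsilon\,\mathbb{E}\bigl[e^{(W/\sigma)^2}\bigr] \leq 2e^\varepsilon$,
where the middle step uses the pointwise bound $L\leq e^\varepsilon$; taking logs yields $\mathbb{E}[WL]^2\leq\sigma^2(\varepsilon+\log 2)$ directly. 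You instead pass through the Gibbs/Donsker--Varadhan variational principle: you bound the relative entropy of the tilted measure by $\mathbb{E}[L\log L]\leq\varepsilon$, control the moment generating function $\mathbb{E}[e^{\lambda W}]\leq 2e^{\lambda^2\sigma^2/4}$ via Young's inequality applied to the super-exponential sub-Gaussian definition, and then optimize over $\lambda$ — arriving at exactly the same intermediate bound $\sigma^2(\varepsilon+\log 2)$, after which both arguments invoke $\varepsilon\geq1$ identically to absorb $\log 2$. The paper's route is shorter and uses less machinery. Your route buys something the paper's does not: it only requires the KL-divergence bound $\mathbb{E}[L\log L]\leq\varepsilon$ on the tilted measure, which is a strictly weaker consequence of differential privacy than the pointwise likelihood-ratio bound, so your argument would extend verbatim to mechanisms satisfying only a divergence-based privacy guarantee; it also cleanly separates the role of the privacy constraint (relative entropy) from the role of the score's tail (MGF control), which is the structure underlying transportation-style inequalities. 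Two small points worth making explicit if you write this up: the Donsker--Varadhan step needs $L\,dP_\theta$ to be a probability measure and $\mathbb{E}[e^{\lambda W}]<\infty$, both of which you correctly noted hold; and squaring the bound $\mathbb{E}[WL]\leq\sigma\sqrt{\varepsilon+\log 2}$ is legitimate only because $\mathbb{E}[WL]=\|\mathbb{E}[S_\theta(X)\,|\,Y=y]\|_2\geq 0$ by the choice of $u$, which deserves a sentence.
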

\begin{proof}
Using the convexity of $x\mapsto e^{x^2}$,
\begin{align*}
\exp\left(\mathbb{E}\left[\frac{1}{\sigma}\langle u,S_\theta(X)\rangle \frac{Q(y|X)}{\mathbb{E}[Q(y|X)]}\right]^2\right) & \leq \mathbb{E}\left[\exp\left(\left(\frac{\langle u,S_\theta(X)\rangle}{\sigma}\right)^2\right)\frac{Q(y|X)}{\mathbb{E}[Q(y|X)]}\right] \\
& \leq e^\varepsilon\mathbb{E}\left[\exp\left(\frac{\langle u,S_\theta(X)\rangle}{\sigma}\right)^2\right] \\
& \leq 2e^\varepsilon \; .
\end{align*}
Taking logs,
$$\mathbb{E}\left[\langle u,S_\theta(X)\rangle \frac{Q(y|X)}{\mathbb{E}[Q(y|X)]}\right]^2 \leq \sigma^2(\varepsilon + \log 2)$$
so that for $\varepsilon \geq 1$,
$$\mathbb{E}\left[\langle u,S_\theta(X)\rangle \frac{Q(y|X)}{\mathbb{E}[Q(y|X)]}\right]^2 \leq 2\sigma^2\varepsilon \; .$$
\end{proof}
With a nearly identical proof we also have the following sub-exponential result.
\begin{prop} \label{prop5}
If $S_\theta(X)$ is sub-exponential with parameter $\sigma$ and $\varepsilon\geq 1$ then $$\mathsf{Tr}(I_Y(\theta)) \leq 2\sigma^2\varepsilon^2 \; .$$
\end{prop}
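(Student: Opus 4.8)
The plan is to follow the proof of Proposition \ref{prop4} essentially verbatim, replacing the convex function $x \mapsto e^{x^2}$ (matched to the sub-Gaussian tail) with $x \mapsto e^{|x|}$ (matched to the sub-exponential tail). As before, I would start from the representation in Lemma \ref{lem1}, namely $\mathsf{Tr}(I_Y(\theta)) = \mathbb{E}_Y\|\mathbb{E}_X[S_\theta(X)|Y]\|_2^2$, fix a value $y$, and choose the unit vector $u = \mathbb{E}[S_\theta(X)|Y=y]/\|\mathbb{E}[S_\theta(X)|Y=y]\|_2$ exactly as in \eqref{eq1}. Recalling that the marginal density satisfies $f(y|\theta) = \mathbb{E}[Q(y|X)]$, this gives the identity
$$\|\mathbb{E}[S_\theta(X)|Y=y]\|_2 = \mathbb{E}\left[\langle u, S_\theta(X)\rangle \frac{Q(y|X)}{\mathbb{E}[Q(y|X)]}\right] \; ,$$
so it suffices to bound the right-hand side uniformly in $y$.

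Next I would apply Jensen's inequality to the convex map $t \mapsto e^{|t|}$, using that $W = Q(y|X)/\mathbb{E}[Q(y|X)]$ satisfies $\mathbb{E}[W]=1$ and therefore reweights the law of $X$ into a probability measure. This yields
$$\exp\left(\left|\mathbb{E}\left[\frac{1}{\sigma}\langle u, S_\theta(X)\rangle \frac{Q(y|X)}{\mathbb{E}[Q(y|X)]}\right]\right|\right) \leq \mathbb{E}\left[\exp\left(\left|\frac{\langle u, S_\theta(X)\rangle}{\sigma}\right|\right)\frac{Q(y|X)}{\mathbb{E}[Q(y|X)]}\right] \; .$$
The differential privacy bound of Proposition \ref{prop1}, in the form $Q(y|X)/\mathbb{E}[Q(y|X)] \leq e^\varepsilon$ (since $\mathbb{E}[Q(y|X)] \geq \min_x Q(y|x)$ while $Q(y|X) \leq e^\varepsilon \min_x Q(y|x)$), lets me pull out a factor of $e^\varepsilon$, and the sub-exponential assumption $\mathbb{E}[e^{|\langle u, S_\theta(X)\rangle/\sigma|}] \leq 2$ bounds the remaining expectation by $2$, for an overall bound of $2e^\varepsilon$.

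Taking logarithms converts this into $\frac{1}{\sigma}\|\mathbb{E}[S_\theta(X)|Y=y]\|_2 \leq \varepsilon + \log 2$, and since this is uniform in $y$, averaging over $Y$ through Lemma \ref{lem1} gives a bound on $\mathsf{Tr}(I_Y(\theta))$. The one genuine departure from the sub-Gaussian proof occurs here: because the exponent of $e^{|x|}$ is linear rather than quadratic, taking logs produces a bound on $\|\mathbb{E}[S_\theta(X)|Y=y]\|_2$ itself rather than on its square, and it is the \emph{subsequent} squaring that promotes the linear-in-$\varepsilon$ estimate into the quadratic dependence $\varepsilon^2$ claimed in the statement.

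I expect no conceptual obstacle here beyond selecting the correct convex function to match the sub-exponential moment condition; the only real care is in the constant bookkeeping. Squaring $\|\mathbb{E}[S_\theta(X)|Y=y]\|_2 \leq \sigma(\varepsilon + \log 2)$ and simplifying under $\varepsilon \geq 1$ (so that $\log 2$ is absorbed into $\varepsilon$) is exactly the step that must be handled carefully to land on the stated constant, since the crude estimate $\varepsilon + \log 2 \leq 2\varepsilon$ alone would only give a larger constant, and tightening it to the clean bound $2\sigma^2\varepsilon^2$ is where the bookkeeping matters.
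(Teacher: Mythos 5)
Your approach is exactly the paper's: the paper proves Proposition \ref{prop5} with the single remark that the proof is ``nearly identical'' to that of Proposition \ref{prop4}, and your substitution of the convex function $x\mapsto e^{|x|}$ for $x\mapsto e^{x^2}$, the change-of-measure Jensen step, the privacy bound $Q(y|X)/\mathbb{E}[Q(y|X)]\leq e^\varepsilon$, and the final squaring are precisely that argument. You also correctly identify the one structural difference: taking logs now bounds $\left\lvert \mathbb{E}\left[\langle u,S_\theta(X)\rangle \frac{Q(y|X)}{\mathbb{E}[Q(y|X)]}\right]\right\rvert$ rather than its square, and it is the subsequent squaring that produces the $\varepsilon^2$ dependence.

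However, the step you defer to ``bookkeeping'' --- tightening the estimate to land on $2\sigma^2\varepsilon^2$ --- cannot actually be carried out, and this is a gap in the paper's stated constant as well as in your write-up. The argument yields $\|\mathbb{E}[S_\theta(X)\mid Y=y]\|_2 \leq \sigma(\varepsilon+\log 2)$, hence $\mathsf{Tr}(I_Y(\theta)) \leq \sigma^2(\varepsilon+\log 2)^2$. For $\varepsilon\geq 1$ the best uniform comparison is $(\varepsilon+\log 2)^2 \leq (1+\log 2)^2\varepsilon^2 \approx 2.87\,\varepsilon^2$, and at $\varepsilon=1$ the quantity $(\varepsilon+\log 2)^2\approx 2.87$ genuinely exceeds $2\varepsilon^2=2$; one has $(\varepsilon+\log 2)^2\leq 2\varepsilon^2$ only when $\varepsilon \geq \log 2/(\sqrt{2}-1)\approx 1.67$. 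The discrepancy with Proposition \ref{prop4} is that there the factor $\varepsilon+\log 2$ bounds the \emph{square} directly, so the linear comparison $\varepsilon+\log 2\leq (1+\log 2)\varepsilon\leq 2\varepsilon$ suffices; here it bounds the first power, so the constant gets squared along with $\varepsilon$. What your proof (and the paper's) actually establishes is $\mathsf{Tr}(I_Y(\theta)) \leq (1+\log 2)^2\sigma^2\varepsilon^2 \leq 3\sigma^2\varepsilon^2$ for all $\varepsilon\geq 1$, or the stated $2\sigma^2\varepsilon^2$ under the slightly stronger hypothesis $\varepsilon\geq 1.7$. Since the paper uses this bound only order-wise (Figure \ref{fig2} records $\sigma^2\varepsilon^2$), nothing downstream is affected, but as written the constant $2$ is not delivered by this argument.
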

\begin{figure}
\begin{center}
\begin{tabular}{ | c | c | c | }
\hline
 condition on $S_\theta(X)$ & upper bound on $\mathsf{Tr}(I_Y(\theta))$ & lower bound for $\ell_2^2$ risk (with $n$ samples) \\
\hline
finite variance $I_0$ & $I_0(e^\varepsilon-1)^2$ & $\frac{d^2}{nI_0(e^\varepsilon-1)^2}$ \\ 
finite variance $I_0$ & $I_0e^\varepsilon$ & $\frac{d^2}{nI_0e^\varepsilon}$ \\ 
$\sigma^2$-sub-Gaussian & $\sigma^2\varepsilon$ & $\frac{d^2}{n\sigma^2\varepsilon}$ \\  
$\sigma$-sub-exponential & $\sigma^2\varepsilon^2$ & $\frac{d^2}{n\sigma^2\varepsilon^2}$ \\
 \hline
\end{tabular}
\end{center}
\caption{A summary of the (order-wise) upper bounds on $\mathsf{Tr}(I_Y(\theta))$ under different conditions on the score function random vector $S_\theta(X)$.}
\label{fig2}
\end{figure}
\subsection{Interactive Models}
So far we have assumed that there are no interactions between the different samples during privatization, so that the privatized samples $Y_1,\ldots,Y_n$ are independent. It is worth pointing out that our Fisher information bounds, and the corresponding lower bounds in the estimation error, can also be extended to interactive communication models where the privatized samples are no longer necessarily independent. We describe both a sequential interactive model \cite{duchi2013local} and a more general fully interactive blackboard model \cite{duchi2019lower} below:
\begin{itemize}
\item[(i)]{\bf Sequential Interaction:} In this scenario, the samples $X_1,\ldots,X_n$ are ordered and the mechanism for sample $i$ can depend on the previously privatized samples $Y_1,\ldots,Y_{i-1}$. Formally, there is a collection of $\varepsilon$ differentially private mechanisms $Q_i(Y_i|x_i,y_1,\ldots,y_{i-1})$ such that
$$\frac{Q_i(S|x_i,y_1,\ldots,y_{i-1})}{Q_i(S|x'_i,y_1,\ldots,y_{i-1})} \leq e^\varepsilon$$
for any $i=1,\ldots,n$, event $S$, and $x_i,x'_i,y_1,\ldots,y_{i-1}$. Using the chain rule for Fisher information,
\begin{equation} \label{eq:chain_rule}
\mathsf{Tr}(I_{Y_1,\ldots,Y_n}(\theta))  = \sum_{i=1}^n \mathbb{E}_{Y_1,\ldots,Y_{i-1}}\left[\mathsf{Tr}(I_{Y_i|Y_{i-1},\ldots,Y_1}(\theta))\right]
\end{equation}
where $I_{Y_i|y_{i-1},\ldots,y_1}(\theta)$ denotes the Fisher information computed using the distribution for $Y_i$ conditioned on $y_1,\ldots,y_{i-1}$. For a given $\theta$, $X_i$ is independent of $Y_1,\ldots,Y_{i-1}$, so conditioning on $y_1,\ldots,y_{i-1}$ just determines which mechanism is used, and because each mechanism satisfies the $\varepsilon$ differentially private condition, each term inside the expectation in \eqref{eq:chain_rule} can be upper bounded just as in Propositions \ref{prop2}-\ref{prop5}, and the total bound on the Fisher information from all $n$ samples will remain the same.
\item[(ii)] {\bf Fully Interactive Blackboard Model:} In this scenario, there are multiple rounds of sequential communication, and a public blackboard with all of the information released after each round is available to all nodes in future rounds of communication. Suppose there are $T$ rounds of communication indexed by $t=1,\ldots,T$. Node $i$ releases $Y_{i,t}$ on round $t$ which is drawn from $Q_{i,t}(Y_{i,t}|b_1,\ldots,b_{t-1},y_{1,t},\ldots,y_{i-1,t},x_i)$ where $B_t = (Y_{1,t},\ldots,Y_{n,t})$ is the blackboard that is visible to all nodes after round $t$. Call the total transcript after all rounds of communication $Z=(B_1,\ldots,B_T)$. We assume that there is a total ``privacy budget'' of $\varepsilon$ for each $X_i$ in the sense that
\begin{equation*}
\frac{\text{Pr}(Z\in S|x_1,\ldots,x_{i},\ldots,x_n)}{\text{Pr}(Z \in S|x_1,\ldots,x'_{i},\ldots,x_n)} \leq e^\varepsilon
\end{equation*}
for any event $S, \; i, \; x_1,\ldots,x_n, \; x_i'$. Note that by the chain rule, the conditional density of $Z$ can be written as
\begin{equation} \label{eq:bb_expansion}
\frac{Q(Z|x_1,\ldots,x_{i},\ldots,x_n)}{Q(Z|x_1,\ldots,x'_{i},\ldots,x_n)} = \frac{\prod_t Q_{i,t}(Y_{i,t}|B_1,\ldots,B_{t-1},Y_{1,t},\ldots,Y_{i-1,t},x_i)}{\prod_tQ_{i,t}(Y_{i,t}|B_1,\ldots,B_{t-1},Y_{1,t},\ldots,Y_{i-1,t},x'_i)} \; .
\end{equation}
The total Fisher information from the transcript $Z$ can be upper bounded by $n$ times the bounds in Propositions \ref{prop2}-\ref{prop5} under the same conditions on the score $S_\theta(X)$, so that the same lower bounds in estimation error also apply for this more general interaction model. The details for this are found in Appendix \ref{app:blackboard}.
\end{itemize}

\section{Applications}
In this section we show how the upper bounds on $\mathsf{Tr}(I_Y(\theta))$ developed in the last section can be used to imply order-optimal lower bounds on the private estimation problem using \eqref{eq:vt2}. We are interested in characterizing the minimax risk
$$\inf_{(Q,\hat\theta)}\max_{\theta\in\Theta} \mathbb{E}\|\hat\theta(Y_1,\ldots,Y_n) - \theta\|_2^2$$
where the estimator $\hat\theta$ is a function of the privatized samples $Y_1,\ldots,Y_n$, and the infimum is taken jointly over the estimator $\hat\theta$ and privatization mechanism $Q$. Upper bounds can therefore be found by jointly designing an estimator and privatization mechanism that achieve a certain worst-case error.

\begin{cor}[Gaussian location model]
Suppose $X_i \sim \mathcal{N}(\theta,\sigma_0^2 I_d)$ and $\Theta = [-B,B]^d$. In this case $S_\theta(X_i)\sim \mathcal{N}\left(0,\frac{1}{\sigma_0^2}I_d\right)$ and the conditions for Proposition 2 and Proposition 4 are satisfied with $I_0 = \frac{1}{\sigma_0^2}$ and $\sigma^2 = O\left(\frac{1}{\sigma_0^2}\right)$, respectively. Using the van Trees inequality we have
$$\max_{\theta\in\Theta} \mathbb{E}\|\hat\theta(Y_1,\ldots,Y_n) - \theta\|_2^2 \geq c\frac{\sigma_0^2 d^2}{n\min\{\varepsilon^2,\varepsilon\}}$$
for an absolute constant $c$ if $\frac{n\min\{\varepsilon^2,\varepsilon\}}{\sigma_0^2} \geq \frac{d}{B^2}$.
\end{cor}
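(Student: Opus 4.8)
The plan is to follow the program set up in Section II: compute the score of the Gaussian model, verify that it satisfies the hypotheses of Propositions \ref{prop2} and \ref{prop4}, combine the two resulting bounds on $\mathsf{Tr}(I_Y(\theta))$ into a single bound that is uniform in $\varepsilon$, and finally feed this into the van Trees inequality \eqref{eq:vt2}.

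First I would compute the score directly. Since $f(x|\theta) = (2\pi\sigma_0^2)^{-d/2}\exp(-\|x-\theta\|_2^2/(2\sigma_0^2))$, we have $S_\theta(x) = \nabla_\theta \log f(x|\theta) = (x-\theta)/\sigma_0^2$, so that $S_\theta(X) = (X-\theta)/\sigma_0^2 \sim \mathcal{N}(0,\sigma_0^{-2}I_d)$, as claimed. For any unit vector $u$ the projection $\langle u, S_\theta(X)\rangle \sim \mathcal{N}(0,\sigma_0^{-2})$, so $\mathbb{E}[\langle u, S_\theta(X)\rangle^2] = \sigma_0^{-2} =: I_0$, which is exactly the hypothesis of Proposition \ref{prop2}. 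For the sub-Gaussian parameter required by Proposition \ref{prop4}, I would use the standard identity $\mathbb{E}[e^{aZ^2}] = (1-2a\tau^2)^{-1/2}$ for $Z\sim\mathcal{N}(0,\tau^2)$ and $a\tau^2 < 1/2$; taking $\tau^2 = \sigma_0^{-2}$ and $a = 1/\sigma^2$, solving $(1-2\sigma_0^{-2}/\sigma^2)^{-1/2}\le 2$ shows that any $\sigma^2 \geq \tfrac{8}{3}\sigma_0^{-2}$ satisfies the sub-Gaussian condition, i.e. $\sigma^2 = O(\sigma_0^{-2})$.

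Next I would assemble the uniform Fisher information bound by splitting on the size of $\varepsilon$. For $0<\varepsilon<1$ the remark following Proposition \ref{prop2} gives $\mathsf{Tr}(I_Y(\theta))\le (e-1)^2 I_0 \varepsilon^2 = O(\varepsilon^2/\sigma_0^2)$, while for $\varepsilon\ge 1$ Proposition \ref{prop4} gives $\mathsf{Tr}(I_Y(\theta))\le 2\sigma^2\varepsilon = O(\varepsilon/\sigma_0^2)$. Because both bounds are independent of $\theta$, they combine into $\sup_{\theta\in\Theta}\mathsf{Tr}(I_Y(\theta)) \le C\,\sigma_0^{-2}\min\{\varepsilon^2,\varepsilon\}$ for an absolute constant $C$, valid for every $\varepsilon>0$.

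Finally I would substitute this into \eqref{eq:vt2} to obtain
$$\sup_{\theta\in\Theta}\mathbb{E}_\theta\|\hat\theta-\theta\|_2^2 \geq \frac{d^2}{nC\sigma_0^{-2}\min\{\varepsilon^2,\varepsilon\} + d\pi^2/B^2} \; .$$
Under the stated hypothesis $n\min\{\varepsilon^2,\varepsilon\}/\sigma_0^2 \ge d/B^2$, the prior term is dominated by the Fisher term, since $d\pi^2/B^2 \le \pi^2\, n\sigma_0^{-2}\min\{\varepsilon^2,\varepsilon\}$, so the denominator is at most $(C+\pi^2)\,n\sigma_0^{-2}\min\{\varepsilon^2,\varepsilon\}$, giving the claimed bound with $c = 1/(C+\pi^2)$. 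The only genuinely non-routine steps are pinning down the sub-Gaussian constant for Proposition \ref{prop4} and checking that the two regime-specific bounds glue together into the single expression $\min\{\varepsilon^2,\varepsilon\}$; the rest is bookkeeping with the van Trees inequality.
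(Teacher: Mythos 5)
Your proposal is correct and follows exactly the route the paper intends: compute the Gaussian score, verify the hypotheses of Propositions \ref{prop2} and \ref{prop4} with $I_0=\sigma_0^{-2}$ and $\sigma^2=O(\sigma_0^{-2})$, glue the two regimes into $\sup_\theta \mathsf{Tr}(I_Y(\theta)) \preceq \sigma_0^{-2}\min\{\varepsilon^2,\varepsilon\}$, and plug into \eqref{eq:vt2}, using the sample-size condition to absorb the prior term $d\pi^2/B^2$. The only additions are details the paper leaves implicit (the explicit chi-square moment-generating-function computation pinning down the sub-Gaussian constant, and the constant bookkeeping $c = 1/(C+\pi^2)$), and these are carried out correctly.
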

This lower bound for the Gaussian location model matches both the lower and upper bounds detailed in \cite{duchi2019lower}. The condition on $n$ is a mild technical condition that ensures the second term in the denominator of \eqref{eq:vt2} will not dominate the order of the lower bound. We will make similar assumptions in the following examples.
\begin{cor}[discrete distribution estimation] \label{cor2}
Suppose $\mathcal{X} = [1:d+1]$ and $X_i \sim \text{Mult}(1,\theta)$ where $\Theta = \{\theta\in\mathbb{R}^{d+1} \; : \; \sum_{i=1}^{d+1} \theta_i = 1\}.$ Then
$$\max_{\theta\in\Theta} \mathbb{E}\|\hat\theta(Y_1,\ldots,Y_n) - \theta\|_2^2 \geq c\frac{ d}{n\min\{(e^\varepsilon-1)^2,e^\varepsilon\}}$$
for an absolute constant $c$ if $n\min\{(e^\varepsilon-1)^2,e^\varepsilon\} \geq d^2$.
\end{cor}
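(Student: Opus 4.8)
The plan is to combine the van Trees bound \eqref{eq:vt2} with the trace bounds of Propositions \ref{prop2} and \ref{prop3}, applied not to the full simplex but to a carefully chosen box-shaped subfamily of $\Theta$. Since $\max_{\theta\in\Theta}\mathbb{E}\|\hat\theta-\theta\|_2^2$ dominates the maximum over any subfamily, it suffices to exhibit a subfamily on which the right-hand side of \eqref{eq:vt2} is of the claimed order. The crux is to choose this subfamily so that the $I_0$ bounding $\mathbb{E}[\langle u,S_\theta(X)\rangle^2]$ is of order $1/d$. A naive choice in which one perturbs the first $d$ coordinates freely and absorbs the slack into $\theta_{d+1}$ fails: the Fisher information matrix then has entries $\delta_{kl}/\theta_k + 1/\theta_{d+1}$, and hence an eigenvalue of order $d^2$ along the all-ones direction. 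Putting prior mass on that highly informative direction would cost a factor of $d$ in the final bound.

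To avoid this, I would take $d+1$ even and pair the coordinates as $(1,2),(3,4),\ldots$, writing $\theta_{2i-1}=\tfrac{1+\beta_i}{d+1}$ and $\theta_{2i}=\tfrac{1-\beta_i}{d+1}$ for $\beta_i\in[-\tfrac12,\tfrac12]$, giving $m=(d+1)/2$ free parameters $\beta=(\beta_1,\ldots,\beta_m)$. Each pair carries constant mass $\tfrac{2}{d+1}$, so the total is automatically one and the all-ones direction is never excited. Differentiating $\log f(x\mid\theta(\beta))$ in $\beta_i$ and using that a single draw hits at most one pair, the Fisher information in the $\beta$-coordinates is diagonal with entries $\tfrac{2}{(d+1)(1-\beta_i^2)}\asymp 1/d$; hence $\mathbb{E}[\langle u,S_\beta(X)\rangle^2]\le I_0$ for all unit $u$ with $I_0=O(1/d)$ uniformly over the subfamily. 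Propositions \ref{prop2} and \ref{prop3} then give $\mathsf{Tr}(I_Y(\beta))\le I_0\min\{(e^\varepsilon-1)^2,e^\varepsilon\}=O\!\big(\tfrac1d\min\{(e^\varepsilon-1)^2,e^\varepsilon\}\big)$.

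Applying \eqref{eq:vt2} in the $\beta$-parametrization (dimension $m\asymp d$, half-width $B=\tfrac12$) yields
$$\sup_\beta\mathbb{E}\|\hat\beta-\beta\|_2^2\ \ge\ \frac{m^2}{n\,I_0\min\{(e^\varepsilon-1)^2,e^\varepsilon\}+m\pi^2/B^2}\ \gtrsim\ \frac{d^3}{n\min\{(e^\varepsilon-1)^2,e^\varepsilon\}},$$
where the stated condition $n\min\{(e^\varepsilon-1)^2,e^\varepsilon\}\ge d^2$ is exactly what makes the first denominator term dominate the prior term $m\pi^2/B^2\asymp d$. Finally I would transfer this to the $\ell_2$ risk for $\theta$: since $\theta_{2i-1}-\theta_{2i}=\tfrac{2\beta_i}{d+1}$, the induced estimator $\hat\beta_i=\tfrac{d+1}{2}(\hat\theta_{2i-1}-\hat\theta_{2i})$ satisfies $\|\hat\beta-\beta\|_2^2\le\tfrac{(d+1)^2}{2}\|\hat\theta-\theta\|_2^2$ by the triangle inequality, so $\mathbb{E}\|\hat\theta-\theta\|_2^2\ge\tfrac{2}{(d+1)^2}\mathbb{E}\|\hat\beta-\beta\|_2^2\gtrsim \tfrac{d}{n\min\{(e^\varepsilon-1)^2,e^\varepsilon\}}$, as claimed.

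The main obstacle is conceptual rather than computational: one must recognize that the full multinomial score is badly conditioned, with an eigenvalue of order $d^2$ in the all-ones direction, and that the correct subfamily freezes this direction, leaving a well-conditioned family whose per-coordinate Fisher information is $\asymp 1/d$. Once the subfamily is fixed, the score computation, the invocation of Propositions \ref{prop2}--\ref{prop3}, the van Trees step, and the change of variables are routine. A minor bookkeeping point is the parity of $d+1$ and the positivity of all $\theta_j$, both handled by the explicit pairing and the restriction $\beta_i\in[-\tfrac12,\tfrac12]$.
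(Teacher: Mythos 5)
Your proposal is correct, but it takes a genuinely different route from the paper, and your motivating claim about why the ``naive'' subfamily fails is not quite right. The paper's proof (Appendix~\ref{app3}) does precisely what you dismiss: it perturbs the first $d$ coordinates freely over the box $\theta_i\in\left[\frac{1}{4d},\frac{1}{2d}\right]$ and absorbs the slack into $\theta_{d+1}$. The key is that this forces $\theta_{d+1}\geq\frac12$, so the rank-one part of the score covariance contributes only $\frac{1}{\theta_{d+1}}\left(\sum_i u_i\right)^2\leq 2d$; the eigenvalue of order $d^2$ along the all-ones direction that you describe occurs only when the base point is near the uniform distribution, where $\theta_{d+1}\asymp 1/d$. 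With $I_0=6d$, dimension $d$, and half-width $B\asymp 1/d$, the paper applies \eqref{eq:vt2} directly and gets $\frac{d^2}{6nd\min\{(e^\varepsilon-1)^2,e^\varepsilon\}+O(d^3)}\gtrsim\frac{d}{n\min\{(e^\varepsilon-1)^2,e^\varepsilon\}}$ with no reparametrization. Your pairing construction instead freezes the all-ones direction entirely: the score covariance in the $\beta$ coordinates is diagonal with entries $\asymp 1/d$, so $I_0=O(1/d)$, van Trees in dimension $m\asymp d$ with $B=\frac12$ gives order $\frac{d^3}{n\min\{(e^\varepsilon-1)^2,e^\varepsilon\}}$, and the change of variables back to $\theta$ costs a factor $\asymp d^{-2}$; the two effects cancel and you land on the same bound under the same condition $n\min\{(e^\varepsilon-1)^2,e^\varepsilon\}\geq d^2$. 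Both arguments rest on the same machinery (Propositions~\ref{prop2} and~\ref{prop3} applied on a product sub-box, followed by \eqref{eq:vt2}), so what your route buys is a well-conditioned, symmetric subfamily in which no coordinate plays a distinguished role, at the price of a change-of-variables step; the paper's route is shorter but requires noticing that placing the base point away from uniform tames the all-ones eigenvalue. Two minor points in your write-up are worth making explicit: the transfer step is legitimate because the van Trees bound holds for \emph{every} estimator measurable in $Y_1,\ldots,Y_n$, in particular the induced $\hat\beta_i=\frac{d+1}{2}(\hat\theta_{2i-1}-\hat\theta_{2i})$; and when $d+1$ is odd one can fix a single leftover coordinate at $\frac{1}{d+1}$ and pair the remaining $d$, which changes only constants.
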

Corollary \ref{cor2} follows by applying Propositions \ref{prop2} and \ref{prop3} with variance $I_0 \leq 6d$. The details are included in Appendix \ref{app3}.
The lower bound for the discrete distribution example gives the same order as the upper and lower bounds from \cite{duchi2013local} when $\varepsilon$ is close to zero, and it also matches the upper and lower bounds from \cite{yebarg2018} when $1 \preceq e^\varepsilon \preceq d$. Other mechanisms for this model are discussed in \cite{kairouz16}.
%\begin{cor}[Sparse Bernoulli models] \label{cor3}
%Suppose $P_\theta = \prod_{j=1}^d\mathsf{Bern}(\theta_j)$.
%\begin{itemize}
%%\item[(i)]Dense Bernoulli: If $\Theta = [0,1]^d$ and $n\min\{\varepsilon^2,\varepsilon\} \geq d$, then
%%$$\inf_{(Q,\hat\theta)}\max_{\theta\in\Theta} \mathbb{E}\|\hat\theta(Y_1,\ldots,Y_n) - \theta\|_2^2 \asymp \frac{d^2}{n\min\{\varepsilon^2,\varepsilon,d\}} \; .$$
%\item[(i)]Sparse Bernoulli: If $\Theta = \left\{\theta\in[0,1]^d\; : \; \sum_{j=1}^d\theta_j = 1\right\}$ and $n\min\{(e^\varepsilon-1)^2,e^\varepsilon\} \geq d^2$, then
%$$\inf_{(Q,\hat\theta)}\max_{\theta\in\Theta} \mathbb{E}\|\hat\theta(Y_1,\ldots,Y_n) - \theta\|_2^2 \asymp \frac{ d}{n\min\{(e^\varepsilon-1)^2,e^\varepsilon,d\}} \; .$$
%\item[(ii)]$s$-Sparse Bernoulli:  If $\Theta = \left\{\theta\in[0,1]^d\; : \; \sum_{j=1}^d\theta_j \leq s \right\}$, $s\leq d^{1-\delta}$ for some $\delta>0$, $n\varepsilon \geq d\log d$, and $20\log d \leq \varepsilon \leq s\log d$, then
%\begin{equation} \label{eq:cor3}\inf_{(Q,\hat\theta)}\max_{\theta\in\Theta} \mathbb{E}\|\hat\theta(Y_1,\ldots,Y_n) - \theta\|_2^2 \asymp \frac{s^2\log d}{n\varepsilon} \; .\end{equation}
%\end{itemize}
%\end{cor}
\begin{cor}[Sparse Bernoulli models] \label{cor3}
Suppose $P_\theta = \prod_{j=1}^d\mathsf{Bern}(\theta_j)$.
\begin{itemize}

\item[(i)]Sparse Bernoulli: If $\Theta = \left\{\theta\in[0,1]^d\; : \; \sum_{j=1}^d\theta_j \leq 1\right\}$ and $n\min\{(e^\varepsilon-1)^2,e^\varepsilon\} \geq d^2$, then
$$\inf_{(Q,\hat\theta)}\max_{\theta\in\Theta} \mathbb{E}\|\hat\theta(Y_1,\ldots,Y_n) - \theta\|_2^2 \asymp \frac{ d}{n\min\{(e^\varepsilon-1)^2,e^\varepsilon,d\}} \; .$$

\item[(ii)]$s$-Sparse Bernoulli:  If $\Theta = \left\{\theta\in[0,1]^d\; : \; \sum_{j=1}^d\theta_j \leq s \right\}$, $s\leq d^{1-\delta}$ for some $\delta>0$. 
\begin{itemize}

    \item[$\bullet$] High privacy regime: if $ \varepsilon \leq \log \lp d/s \rp$ and 
    $$ \frac{n}{\log n} \geq \frac{20}{\min\lp \varepsilon^2, 1 \rp}d^3\log d $$
    then
    \begin{equation} \label{eq:cor3}\inf_{(Q,\hat\theta)}\sup_{\theta\in\Theta} \mathbb{E}\|\hat\theta(Y_1,\ldots,Y_n) - \theta\|_2^2 \asymp \frac{s d}{n\min\lp e^\varepsilon, \lp e^\varepsilon -1 \rp^2, d \rp} \; .\end{equation}
    
    \item[$\bullet$] Low privacy regime: if $20\log d \leq \varepsilon \leq s\log d$ and $n\varepsilon \geq d\log d$, then
    \begin{equation} \label{eq:cor3}\inf_{(Q,\hat\theta)}\sup_{\theta\in\Theta} \mathbb{E}\|\hat\theta(Y_1,\ldots,Y_n) - \theta\|_2^2 \asymp \frac{s^2\log d}{n\varepsilon} \; .\end{equation}

\end{itemize}
\end{itemize}
\end{cor}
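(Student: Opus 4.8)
The plan is to prove matching upper (achievability) and lower (converse) bounds, with the converse in every regime following from the Fisher-information bounds of Propositions~\ref{prop2}--\ref{prop5} fed into the van~Trees inequality~\eqref{eq:vt}, and the achievability from explicit mechanisms. The common first step is to compute the score of the product-Bernoulli model: since $f(x\mid\theta)=\prod_j \theta_j^{x_j}(1-\theta_j)^{1-x_j}$, the coordinates of $S_\theta(X)$ are independent and mean-zero with $S_\theta(X)_j=\frac{X_j-\theta_j}{\theta_j(1-\theta_j)}$, so for any unit vector $u$, $\mathbb{E}[\langle u,S_\theta(X)\rangle^2]=\sum_j \frac{u_j^2}{\theta_j(1-\theta_j)}$. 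Restricting attention (for the converse) to a product prior $\lambda=\prod_j\lambda_j$ supported on coordinates $\theta_j\asymp s/d$ — which respects $\sum_j\theta_j\le s$ — this gives $I_0\asymp d/s$, while the data-processing bound $\mathsf{Tr}(I_Y(\theta))\le \mathsf{Tr}(I_X(\theta))=\sum_j\frac{1}{\theta_j(1-\theta_j)}\asymp d^2/s = I_0\cdot d$ supplies the $\min\{\cdot,d\}$ cap. Choosing each $\lambda_j$ to be the minimal-Fisher-information prior on its interval of width $\asymp s/d$ yields $J(\lambda)\asymp d\,(d/s)^2=d^3/s^2$.

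For the high-privacy converse (and part (i)) I would combine Propositions~\ref{prop2} and \ref{prop3} with the cap above to get $\mathsf{Tr}(I_Y(\theta))\lesssim \frac{d}{s}\min\{(e^\varepsilon-1)^2,e^\varepsilon,d\}$, uniformly over the prior support, and substitute into~\eqref{eq:vt}. The stated sample-size condition is exactly what forces the $n\,\mathsf{Tr}(I_Y)$ term in the denominator to dominate $J(\lambda)\asymp d^3/s^2$, leaving $\frac{d^2}{n\cdot(d/s)\min\{\ldots\}}=\frac{sd}{n\min\{(e^\varepsilon-1)^2,e^\varepsilon,d\}}$; setting $s=1$ recovers part (i), with the $d$ inside the minimum binding precisely when $e^\varepsilon\gtrsim d$.

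The low-privacy converse is the crux and requires a sharper input than the data-processing cap. The key observation is that although $\mathrm{Var}(S_\theta(X)_j)\asymp d/s$, the coordinate $S_\theta(X)_j=\frac{X_j-\theta_j}{\theta_j(1-\theta_j)}$ takes its large value $\approx 1/\theta_j$ only with the small probability $\theta_j\asymp s/d$, so its sub-Gaussian (Orlicz-$\psi_2$) parameter is only $\sigma_j^2\asymp \frac{1}{\theta_j^2\log(1/\theta_j)}\asymp \frac{d^2}{s^2\log d}$, where $\log(1/\theta_j)\asymp\log d$ because $s\le d^{1-\delta}$. By independence the full score $S_\theta(X)$ is then sub-Gaussian with $\sigma^2\asymp \frac{d^2}{s^2\log d}$, and since $\varepsilon\ge 20\log d\ge 1$ Proposition~\ref{prop4} gives $\mathsf{Tr}(I_Y(\theta))\lesssim \sigma^2\varepsilon\asymp \frac{d^2\varepsilon}{s^2\log d}$ — smaller than the cap $d^2/s$ by the factor $\varepsilon/(s\log d)<1$. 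Plugging this into~\eqref{eq:vt} together with $J(\lambda)\asymp d^3/s^2$ and the condition $n\varepsilon\ge d\log d$ (which makes the Fisher term dominate) produces $\frac{d^2}{n\,d^2\varepsilon/(s^2\log d)}=\frac{s^2\log d}{n\varepsilon}$. The delicate point here is verifying the $\psi_2$-norm claim uniformly over the prior support and carrying the constants through so that the $\varepsilon\ge1$ hypothesis of Proposition~\ref{prop4} and the regime constraints are met.

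For achievability I would build explicit $(Q,\hat\theta)$ pairs. In the high-privacy/part-(i) range the model is essentially a discrete distribution estimation problem (each $X_i$ a sparse $\{0,1\}^d$ vector of small total mass), so I would adapt the randomized-response/Hadamard-response mechanisms used for Corollary~\ref{cor2}, rescaled to total mass $s$, together with an unbiased debiased estimator; its $\ell_2^2$ risk matches $\frac{sd}{n\min\{(e^\varepsilon-1)^2,e^\varepsilon,d\}}$. In the low-privacy range I would use a subsampled index-reporting mechanism: split the budget into $k\asymp \varepsilon/\log d$ sub-reports, each releasing one randomly chosen active coordinate over the alphabet $[d]$ at per-report budget $\asymp\log d$ (hence near-lossless, since $e^{\varepsilon/k}\gtrsim d$), so that each user reliably reports $\asymp\min\{|A_i|,k\}$ of its active coordinates; the coordinate-count estimator then has total $\ell_2^2$ error $\asymp \frac{s^2\log d}{n\varepsilon}$, with the $\log d$ entering through $k$. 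The main obstacle throughout is this low-privacy regime: pinning down the $\psi_2$ constant for the converse and the variance/Poissonization analysis of the index-reporting mechanism for achievability, and matching the constants in both against the stated ranges of $\varepsilon$ and $n$.
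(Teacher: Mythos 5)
Your lower-bound (converse) arguments coincide with the paper's own: restrict to $\theta_j\asymp s/d$, apply Propositions~\ref{prop2} and \ref{prop3} with $I_0\asymp d/s$ (plus the data-processing cap $\mathsf{Tr}(I_Y(\theta))\le\mathsf{Tr}(I_X(\theta))\asymp d^2/s$ for the $d$ term in the minimum, which the paper leaves implicit, so making it explicit is fine), apply Proposition~\ref{prop4} with $\sigma^2\asymp d^2/(s^2\log d)$ in the low-privacy regime, and feed everything into van Trees with $J(\lambda)\asymp d^3/s^2$. One misattribution: the high-privacy hypothesis $\frac{n}{\log n}\ge\frac{20}{\min(\varepsilon^2,1)}d^3\log d$ is \emph{not} ``exactly what forces'' the Fisher term to dominate $J(\lambda)$ (that needs only $n\min\{(e^\varepsilon-1)^2,e^\varepsilon,d\}\gtrsim d^2/s$); its $\log n$ and $d^3\log d$ scaling are consumed by the paper's achievability scheme, which is where your proposal has genuine gaps.

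The gaps are on the achievability side. First, for part (i): reducing to discrete distribution estimation is the right idea, but the categorical variable obtained by reporting the location of the active coordinate has law $p_i=\theta_i\prod_{j\neq i}(1-\theta_j)$, not $\theta_i$. The factor $\prod_{j\neq i}(1-\theta_j)$ is bounded away from $1$ (about $e^{-1}$ when all $\theta_j\asymp 1/d$), so plugging in a rate-optimal estimator of $p$ leaves a systematic error $\sum_i(p_i-\theta_i)^2\asymp 1/d$ that does not shrink with $n$, while the target risk $d/(n\min\{(e^\varepsilon-1)^2,e^\varepsilon,d\})$ tends to $0$ as $n\to\infty$; under the standing assumption $n\min\{(e^\varepsilon-1)^2,e^\varepsilon\}\ge d^2$ this bias already matches the target and then strictly dominates it. An ``unbiased debiased estimator'' for $p$ therefore cannot achieve the stated rate. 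The paper's fix is essential: estimate $P_S=\prod_j(1-\theta_j)$ separately with half the samples (it is bounded below by a universal constant when $\sum_j\theta_j\le1$), and output the truncated ratio $\hat\theta_i=\hat p_i^*/(\hat p_i^*+\hat P_S^*)$. Second, for the high-privacy $s$-sparse case, ``rescaled to total mass $s$'' does not work as a one-shot reduction: for growing $s$ the event $\|X\|_1>1$ is no longer rare and $P_S$ can be $e^{-\Theta(s)}$, so the ratio correction breaks down. The paper instead runs a two-phase, sequentially interactive scheme: first estimate every $\theta_j$ to additive precision $1/d$ via per-coordinate randomized response (this is what the $\frac{n}{\log n}$ hypothesis buys, through Hoeffding plus a union bound), then partition coordinates into $s$ groups of constant total mass and run the part (i) scheme on each group. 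Your proposal contains no such grouping step.

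For the low-privacy upper bound, your subsample-then-report architecture with $k\asymp\varepsilon/\log d$ is close in spirit to the paper's subsampling plus $k$-RR over the alphabet of at-most-$k$-sparse patterns, but the phrase ``coordinate-count estimator'' hides the crux you yourself flag: the subsampling rate $R=\max(\|X_i\|_1,k)/k$ is a function of the private sample, so the natural unbiased estimator $R\,(Y-B)/A$ is not available to the decoder. The paper resolves this by privately estimating the population quantity $\mathbb{E}[R]$ to precision $1/s$ from a separate $o(n)$ batch and normalizing by that estimate; the resulting variance terms of order $s^2/k$ are precisely what yield $\frac{s^2\log d}{n\varepsilon}$. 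Until the bias correction in part (i), the interactive grouping in the high-privacy regime, and this debiasing step are supplied, the upper-bound half of the corollary — and hence the two-sided $\asymp$ claims — is not established.
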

We sketch the proof of Corollary \ref{cor3} in the next section. The sparse Bernoulli example is illustrative of several interesting phenomena. In the sparse Bernoulli model (i), the minimax risk scales exactly the same as in the discrete distribution estimation problem. This shows that even in a statistical model with independent components, the dependence on $\varepsilon$ can be exponential instead of linear.  In this way, the scaling is dictated by the properties of the score function $S_\theta(X)$ rather than the independence of the model. In the $s$-sparse Bernoulli model (ii), we see that in the privacy range  $\log d \preceq \varepsilon \preceq s\log d$, the minimax risk scales as $\frac{s^2\log d}{n\varepsilon}$ instead of the centralized (i.e. without privacy constraints) rate $\frac{s}{n}$. This means that the sample size penalty is of the order $\frac{s\log d}{\varepsilon}$ for privacy. This is noteworthy in that the penalty scales linearly only with the sparsity $s$, rather than with the underlying dimension $d$, which is the case for, e.g, the sparse Gaussian location model.

Note that cases (i) and (ii) above focus on two different parameter regimes of the same model, and the domain $\Theta$ in (i) is a subset of that in (ii). As such, the lower bound in (i) can be regarded as a more local minimax risk bound, while the one in (ii) with $s$ close to $d$ can be regarded as a worst-case minimax bound. These two bounds together illustrate how having additional information that restricts the range of the parameter as in (i) can change the dependence of the risk on the privacy parameter $\varepsilon$.

\section{Proof of Corollary \ref{cor3}}
\subsection{Sparse Bernoulli (i)}
The lower bound follows by applying Propositions \ref{prop2} and \ref{prop3} with a score function that satisfies $I_0 \leq 3d$, as we check below. We restrict our attention to $\Theta' = \left[\frac{1}{2d},\frac{1}{d}\right]^d\subset\Theta$ using the fact that $\sup_{\theta\in\Theta} \mathbb{E}\|\hat\theta(Y_1,\ldots,Y_n) - \theta\|_2^2 \geq \sup_{\theta\in\Theta'} \mathbb{E}\|\hat\theta(Y_1,\ldots,Y_n) - \theta\|_2^2 \; .$ The score function for each component is
\begin{equation} \label{eq:bern_score} S_{\theta_j}(x_j) = \frac{\partial}{\partial\theta_j}\log f(x_j|\theta_j) = \begin{cases} \frac{1}{\theta_i} & \; \text{, if } \; x_j=1 \\ \frac{-1}{1-\theta_i} & \; \text{, if } \; x_j =0 \; .\end{cases} \end{equation}
so that the variance of each component is
\begin{align*}
\mathbb{E}[S_{\theta_j}(x_j)^2] & = \theta_j\frac{1}{\theta_j^2} + (1-\theta_j)\frac{1}{(1-\theta_j)^2} = \frac{1}{\theta_j} + \frac{1}{(1-\theta_j)} \\
& \leq 3d \; .
\end{align*}
By taking sums of independent variables we also have $\mathbb{E}[\langle u, S_{\theta}(x) \rangle^2] \leq 3d$ for any unit vector $u\in\mathbb{R}^d$ and $\theta\in\Theta'$, as desired.

For the upper bound, we demonstrate an estimator that works by reducing the problem to the discrete distribution estimation problem as follows. We perform the analysis for $\sum_i \theta_i = 1$, but the mechanism and the derivation also hold for $\sum_i \theta_i \leq 1$. Moreover, for any \emph{constant} sparsity, say $\sum_i \theta_i \leq c$, we can always perform a randomized mapping to $\lceil c \rceil \cdot d$ symbols and reduce the problem to $1$-Sparse Bernoulli problems (with $\lceil c \rceil$ repetition). Therefore the result holds for any constant sparsity case, i.e. $\sum_i \theta_i \leq c$.
To convert the product Bernoulli model into the distribution estimation problem, define the mapping 
$$ f(   {X}_k) = \begin{cases}
i, &\text{ if } \lVert    {X}_k \rVert_1 = 1 \text{ and }    {X}_k(i) = 1, \\
d+1, &\text{ if  } \lVert    {X}_k \rVert_1 \neq 1.
\end{cases} $$
Then $f(   {X}_k)$ follows $(p_1,..., p_{d+1})$, with
$$ p_i = \theta_i \cdot\prod_{j\neq i} (1-\theta_j), \,\, \forall i \in [d], \,\,  p_{d+1} = 1-\sum_{j=1}^d p_j.$$ 
Also define $P_S \eqDef \prod_{j=1}^d (1-\theta_j)$, then we have 
$$ \theta_i = \frac{\theta_i \cdot \prod_{j\neq i} (1-\theta_j)}{\theta_i \cdot \prod_{j\neq i} (1-\theta_j)+ \prod_{j=1}^d (1-\theta_j)}= \frac{p_i}{p_i+P_S}. $$
Therefore, our strategy is to estimate $p_i$ and $P_S$ separately, and the final estimator will be 
$$ \hat{\theta}_i\eqDef \frac{ \hat{p}_i }{\hat{p}_i+ \hat{P}_S}. $$
It remains to complete the descriptions of the estimators $\hat p_i$ and $\hat P_S$ and to analyze the error from this strategy.

For ease of analysis, we assume that $\theta_i \leq \frac{1}{2}$. for all $i \in \{1,...,d\}$. Note that this assumption can be easily circumvented by using a randomized mapping  $h:    {X}_k \ra \{0,1\}^{2d}$, such that if $   {X}_k(i) = 1$ then set $h(   {X}_k)_{2i}$ and $h(   {X}_k)_{2i+1}$ to $1$ with probability $\frac{1}{2}$, and otherwise set them to $0$.
Obviously $h(   {X}_k)$ follows product Bernoulli with parameter $\frac{\theta_i}{2} \leq \frac{1}{2}$ in each dimension.

\subsection*{Discrete distribution estimator under LDP}
First we review the distribution estimation problem with LDP constraint, where each node observes a sample $X_k \in \mcal{X} = \lbp 1,...,d\rbp$ from a discrete distribution $   {p} = (p_1,...,p_d)$ and is allowed to transmit information under $\varepsilon$-local privacy constraint. %\cite{ye2017optimal} proposes the following rate-optimal privatization scheme that maps each $X_k$ into $y\in\mcal{Y}_{d,s}\eqDef \lbp y\in \lbp0,1\rbp^d: \sum_i y_i = s\rbp$:
%$$ \mcal{M}_{\text{DE}}(y|i) = \frac{e^\varepsilon y_i + (1-y_i)}{e^\varepsilon {d-1 \choose s-1}+{d-1 \choose s}}. $$
%By selecting $s = \lb\frac{1}{e^{\varepsilon}+1}\rb$, the $\ell_2$ estimation error is
The LDP mechanisms can be viewed as a pair of 
\begin{itemize}
    \item locally privatization mapping $Q_{\text{DE}}(y|i)$ that maps each observation $X_k$ to $Y_k \in \mcal{Y}$
    \item an estimator $\hat{   {p}}\lp Y^n\rp = (\hat{p}_1\lp Y^n \rp,...,\hat{p}_k \lp Y^n\rp)$.
\end{itemize}
In particular, \cite{yebarg2018} propose the following privatization mapping that maps each $X_k$ into $y\in\mcal{Y}_{d, w}\eqDef \lbp y\in \lbp0,1\rbp^d: \sum_i y_i = w\rbp$ with the following transitional probability:
$$ Q_{\text{DE}}(y|i) = \frac{e^\varepsilon y_i + (1-y_i)}{e^\varepsilon {d-1 \choose w-1}+{d-1 \choose w}}. $$
%By selecting $s = \lb\frac{1}{e^{\varepsilon}+1}\rb$, the $\ell_2$ estimation error is
The estimator is 
$$ \lp \frac{(d-1)e^\varepsilon +\frac{(d-1)(d-w)}{w}}{(d-w)(e^\varepsilon-1)}\rp \frac{T_i}{n}-\frac{(w-1)e^\varepsilon+d-w}{(d-w)e^\varepsilon -1},$$
where $T_i \eqDef \sum_{k=1}^n Y_k(i)$.

\begin{theorem}[Proposition~III.1 \citep{yebarg2018}] 
	\begin{equation}\label{eq:l_2_error}
		\E\lVert \hat{   {p}}\lp Y^n\rp -    {p} \rVert^2_2 =
		\frac{1}{n}\lp \frac{\lp w(d-2)+1 \rp e^{2\varepsilon}}{(d-w)\lp e^\varepsilon -1\rp^2}+\frac{2(d-2)}{\lp e^\varepsilon -1\rp^2}+ \frac{(d-2)(d-w)+1}{w\lp e^\varepsilon -1\rp^2} - \sum_i p_i^2\rp.
	\end{equation}
\end{theorem}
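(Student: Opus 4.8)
The plan is to compute the bias and variance of the coordinatewise estimator separately and then assemble them. Write the estimator as an affine function of the empirical count, $\hat p_i = a\,\frac{T_i}{n} + b$, where $a$ and $b$ are the two constants appearing in its definition and $T_i = \sum_{k=1}^n Y_k(i)$. Since the $Y_k$ are i.i.d.\ across $k$, the first step is to verify that this estimator is unbiased; once unbiasedness is established, the risk decomposes cleanly as $\E\lV \hat p(Y^n) - p \rV_2^2 = \sum_{i} \Var(\hat p_i)$, with no bias-squared contribution. This is consistent with the fact that the entire target expression scales as $1/n$.

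The second step is to compute $m_i \eqDef \E[Y_1(i)]$. Conditioning on the observed symbol $X_1 = j$ and summing the mechanism $Q_{\text{DE}}(y|j)$ over all $y \in \mathcal{Y}_{d,w}$ with $y_i = 1$, I would split into the case $j = i$ and the case $j \neq i$. In the first case every such $y$ carries weight $e^\varepsilon/C$ with $C = e^\varepsilon\binom{d-1}{w-1} + \binom{d-1}{w}$, giving $\Pr[Y_1(i)=1 \mid X_1=i] = e^\varepsilon\binom{d-1}{w-1}/C$; in the second case one further splits on whether $y_j = 1$ or $y_j = 0$, producing $(e^\varepsilon\binom{d-2}{w-2} + \binom{d-2}{w-1})/C$. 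Averaging over $X_1 \sim p$ then yields $m_i$ as an affine function of $p_i$, and substituting into $a\,m_i + b$ should return exactly $p_i$, confirming unbiasedness and pinning down the constants $a,b$.

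The third step exploits $Y_1(i) \in \{0,1\}$, so that $\Var(Y_1(i)) = m_i(1 - m_i)$ and hence $\Var(\hat p_i) = \frac{a^2}{n} m_i(1-m_i)$. Using unbiasedness to write $m_i = (p_i - b)/a$, a short expansion gives $\Var(\hat p_i) = \frac{1}{n}\lp p_i(a+2b) - b(a+b) - p_i^2 \rp$. Summing over $i$ and using $\sum_i p_i = 1$, the quadratic terms collapse to exactly $-\frac{1}{n}\sum_i p_i^2$, which is the source of that term in the theorem, while the remaining contribution is the constant $\frac{1}{n}\lp (a+2b) - d\,b(a+b) \rp$.

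The main obstacle is purely algebraic: showing that this constant $(a+2b) - d\,b(a+b)$, once $a$ and $b$ are written out in terms of $e^\varepsilon, d, w$, simplifies to the three-fraction expression $\frac{(w(d-2)+1)e^{2\varepsilon}}{(d-w)(e^\varepsilon-1)^2} + \frac{2(d-2)}{(e^\varepsilon-1)^2} + \frac{(d-2)(d-w)+1}{w(e^\varepsilon-1)^2}$. This requires careful bookkeeping with the binomial identities $\binom{d-1}{w-1} = \binom{d-2}{w-2} + \binom{d-2}{w-1}$ and $\binom{d-1}{w} = \binom{d-2}{w-1} + \binom{d-2}{w}$ to reduce $C$ and $m_i$ to ratios in $d$ and $w$, after which everything is elementary. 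I expect the bulk of the effort, and the only real risk of sign or off-by-one errors, to lie in this final simplification.
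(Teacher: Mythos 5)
You should first know that the paper contains no proof of this statement: it is quoted as Proposition~III.1 of the cited reference \cite{yebarg2018} and used as a black box in the proof of Corollary~\ref{cor3}, so there is no internal argument to compare against. Your plan is the natural (and essentially the cited reference's) proof, and every step you outline is sound. The conditional probabilities are right: $\Pr[Y_1(i)=1\mid X_1=i]=e^\varepsilon\binom{d-1}{w-1}/C$ and $\Pr[Y_1(i)=1\mid X_1=j]=\lp e^\varepsilon\binom{d-2}{w-2}+\binom{d-2}{w-1}\rp/C$ for $j\neq i$, where $C=e^\varepsilon\binom{d-1}{w-1}+\binom{d-1}{w}$; Pascal's rule gives $\alpha-\beta=\binom{d-2}{w-1}(e^\varepsilon-1)/C$, so $a=1/(\alpha-\beta)$ reproduces the first printed constant; and your expansion $\Var(\hat p_i)=\frac{1}{n}\lp p_i(a+2b)-b(a+b)-p_i^2\rp$ together with $\sum_i p_i=1$ correctly collapses the risk to $\frac{1}{n}\lp (a+2b)-d\,b(a+b)-\sum_i p_i^2\rp$.

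One warning about your final step: carrying the algebra through, you will find
\begin{equation*}
(a+2b)-d\,b(a+b)=\frac{\lp w(d-2)+1\rp e^{2\varepsilon}}{(d-w)\lp e^\varepsilon-1\rp^2}+\frac{2(d-2)\,e^{\varepsilon}}{\lp e^\varepsilon-1\rp^2}+\frac{(d-2)(d-w)+1}{w\lp e^\varepsilon-1\rp^2}\;,
\end{equation*}
i.e.\ the middle term carries a factor $e^\varepsilon$ that is absent from the equation as transcribed in this paper, so your stated goal of matching the displayed three-fraction expression exactly will fail. You can confirm the discrepancy at $d=3$, $w=1$: there $C=e^\varepsilon+2$, $m_i=(1+p_i(e^\varepsilon-1))/(e^\varepsilon+2)$, and a direct computation gives total risk $\frac{1}{n}\lp\frac{e^{2\varepsilon}+2e^\varepsilon+3}{(e^\varepsilon-1)^2}-\sum_i p_i^2\rp$, whereas the printed formula evaluates to $\frac{1}{n}\lp\frac{e^{2\varepsilon}+5}{(e^\varepsilon-1)^2}-\sum_i p_i^2\rp$. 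The printed estimator has a related transcription slip: its intercept denominator should read $(d-w)(e^\varepsilon-1)$, not $(d-w)e^\varepsilon-1$, which is exactly what your unbiasedness derivation will produce. Neither issue reflects a gap in your method; your derivation terminates at the corrected expression above. Since by AM--GM the corrected middle term is dominated by the sum of the other two, the two expressions agree up to a factor of $2$, so the order-wise conclusions drawn from this theorem in Corollary~\ref{cor3} are unaffected.
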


For the low privacy regime $e^\varepsilon \succeq d$, we select $w=1$ and the $\ell_2$ estimation error is 
\begin{equation}\label{eq:very_low_privacy}
    \E\lVert \hat{   {p}}\lp Y^n\rp -    {p} \rVert^2_2 = O\lp \frac{e^{2\varepsilon}}{n\lp e^\varepsilon-1 \rp ^2} \rp = O\lp\frac{1}{n}\rp. 
\end{equation}
For the regime $e^\varepsilon \prec d$, we select $w = \lb\frac{d}{e^{\varepsilon}+1}\rb$, (see \cite[Proposition III.3]{yebarg2018}) and the $\ell_2$ estimation error is given by
$$  \Theta\lp \frac{d}{n\min \lbp \lp e^\varepsilon -1 \rp^2, e^\varepsilon\rbp} \rp. $$
The estimation error matches the lower bounds in both high and medium privacy regimes \citep{duchi2013local, yebarg2018} and thus is rate-optimal. For the low privacy regime, \cite{yebarg2018} coincides with the $k$-RR scheme from \cite{kairouz16} and achieves optimal rate (i.e. $1/n$) too.  

We will use the rate-optimal distribution estimators $\hat{ {p}}$ to construct an estimator under sparse Bernoulli model with LDP constraint.

\subsubsection*{Estimating $p_i$}
 We use the first half of nodes to estimate $p_i$. For $k\in \lbp 1,..., n/2 \rbp$, node $k$ transmit $Y_k$ according to $Q_{\text{DE}}\lp \cdot | f(   {X}_k) \rp$, and let $\hat{p}_i$ be the rate-optimal estimator of $p_i$ as defined in previous subsection. The $\ell_2$ estimation error of $\hat{   {p}}$ is controlled by 
 \begin{equation}\label{eq:bound_p_i}
 \sum_{i=1}^d \E \lp  \hat{p}_i - p_i\rp^2 \preceq \frac{d}{n \min\lbp e^\varepsilon, (e^\varepsilon - 1)^2\rbp}.
 \end{equation}

We can truncate $\hat{p}_i$ and obtain a better estimator (i.e. with smaller $\ell_2$ risk) since by definition $p_i$ cannot take negative values:
$$\hat{p}^*_i \eqDef \max \lbp  \hat{p}_i, 0 \rbp,$$
and thus $\hat{p}^*_i > 0$ almost surely.

\subsubsection*{Estimating $P_S$}
The second half of nodes are used to estimate $P_S \eqDef \prod_{j=1}^d (1-\theta_j)$, which  maps its observation $X_k$ via
$$  g(   {X}_k) = \begin{cases}
1, \text{ if } \lVert    {X}_k \rVert_1 = 0, \\
0, \text{ else.}
\end{cases} $$
Note that 
\begin{itemize}
	\item$g(   {X}_k) \sim \Ber(P_S)$
	\item $P_S$ is lower bounded by some positive constant (see \cite[Section~4.2]{archayaetal2}):
	\begin{align*}
		P_S 
		& = \prod_{i=1}^d (1-\theta_i) \leq 
		\exp\lp -\sum_{i=1}^d \ln(1-\theta_i) \rp \\
		& = \exp\lp -\sum_{i=1}^d \ln(1-\theta_i) \rp \\
		& =\exp \lp - 1-\sum_{t>1} \lVert    {\theta}\rVert^t_t/t\rp \\
		& \geq\exp \lp - 1-\sum_{t>1} \lVert    {\theta}\rVert^t_2/t\rp \\
		& \geq \frac{1-\frac{1}{\sqrt{2}}}{\exp\lp 1-\frac{1}{\sqrt{2}} \rp}.
	\end{align*}
\end{itemize}

Therefore, for $k\in [n/2+1:n]$, node $k$ transmits $Y_k$ according to $Q_{\text{DE}}\lp \cdot | g(   {X}_k) \rp$, and let $\hat{P}_S$ be the rate-optimal estimator of $P_S$. Estimating $P_S$ is equivalent to distribution estimation problem with $d'=2$ (which may falls into low privacy regime \eqref{eq:very_low_privacy}), and by \eqref{eq:l_2_error} the previous privatization scheme guarantees 

\begin{equation}\label{eq:bound_p_s_1}
\E \lp \hat{P}_S-P_S\rp^2 \preceq 
\frac{e^{2\varepsilon}}{n\lp e^\varepsilon -1\rp^2}.
%\prec \frac{d}{n \max\lbp e^\varepsilon, (e^\varepsilon - 1)^2\rbp}.
\end{equation}
Notice that if $e^\varepsilon \leq d$, then 
$$\frac{e^{2\varepsilon}}{n\lp e^\varepsilon -1\rp^2} \leq \frac{d e^{\varepsilon}}{n\lp e^\varepsilon -1\rp^2} \asymp \frac{d}{n \min\lbp e^\varepsilon, (e^\varepsilon - 1)^2\rbp} $$
(the last "$\asymp$" is derived by separating $e^\varepsilon$ into $e^\varepsilon \succ 1$ and $e^\varepsilon \asymp 1$).
Otherwise $e^\varepsilon \geq 2$ and \eqref{eq:bound_p_s_1} is $O(1/n)$.
Since we already know that $P_S \geq \frac{1-\frac{1}{\sqrt{2}}}{\exp\lp 1-\frac{1}{\sqrt{2}} \rp}$, the truncated estimator 

$$ \hat{P}^*_{S} \eqDef \max \lbp \hat{P}_S, \frac{1-\frac{1}{\sqrt{2}}}{\exp\lp 1-\frac{1}{\sqrt{2}} \rp}\rbp$$
must have smaller $\ell_2$ estimation error and is bounded
\begin{equation}\label{eq:bound_p_s_3}
\frac{1}{\hat{P}^*_S} \leq \frac{\exp\lp 1-\frac{1}{\sqrt{2}} \rp}{1-\frac{1}{\sqrt{2}}} \text{ almost surely}.
\end{equation}

\subsubsection*{Analysis of $\ell_2$ error of $\hat{   {\theta}}$}
Our final estimator is $$ \hat{\theta}_i\lp Y^n \rp \eqDef \frac{ \hat{p}^*_i(Y_1,...,Y_{\frac{n}{2}}) }{\hat{p}^*_i(Y_1,...,Y_{\frac{n}{2}})+ \hat{P}^*_S(Y_{\frac{n}{2}+1},...,Y_n)}. $$ 
The $\ell_2$ error is 
\begin{align*}
	\E\lp \hat{\theta}_i - \theta_i \rp^2 
	& = \E \lp \frac{\hat{p}^*_i}{\hat{p}^*_i+\hat{P}^*_S} - \frac{p_i}{p_i+P_S}\rp^2 \\
	& = \E \lp  \frac{\hat{p}^*_i - {p}_i}{\hat{p}^*_i+\hat{P}^*_S}+ {p}_i \lp \frac{1}{\hat{p}^*_i+\hat{P}^*_S} - \frac{1}{p_i+P_S}\rp\rp^2\\
	& \leq 2\underbrace{\E \lp  \frac{\hat{p}^*_i - p_i}{\hat{p}^*_i+\hat{P}^*_S}\rp^2}_{\text{(a)}} + 2 p_i^2  \underbrace{ \E \lp \frac{1}{\hat{p}^*_i+\hat{P}^*_S} - \frac{1}{p_i+P_S}\rp^2}_{\text{(b)}}
\end{align*}

By \eqref{eq:bound_p_i} and \eqref{eq:bound_p_s_3}, (a) can be bounded by
$$ \text{(a)} \leq 2\E \lb \lp \frac{1}{\hat{p}^*_i+\hat{P}^*_S} \rp^2 \lp \hat{p}^*_i - p_i \rp^2\rb \leq 2\lp \frac{\exp\lp 1-\frac{1}{\sqrt{2}} \rp}{1-\frac{1}{\sqrt{2}}}\rp^2 \E \lp \hat{p}^*_i - p_i \rp^2,$$

By \eqref{eq:bound_p_i}, \eqref{eq:bound_p_s_1} and \eqref{eq:bound_p_s_3} (b) can be bounded by
\begin{align*}
	(b) & =  2 p_i^2 \E \lb \lp \frac{1}{\lp\hat{p}^*_i+\hat{P}^*_S\rp\lp p_i+P_S\rp}\rp^2  \lp \hat{p}^*_i - p_i+\hat{P}^*_S - P_S\rp^2\rb \\
	& \leq 4 \lp \frac{\exp\lp 1-\frac{1}{\sqrt{2}} \rp}{1-\frac{1}{\sqrt{2}}}\rp^4 p_i^2\lp \E\lp\hat{p}^*_i - p_i\rp^2 +\E\lp\hat{P}^*_S - P_S\rp^2 \rp.
\end{align*}

Finally, summing over all $i\in[d]$, we have 
\begin{align*}
	\E \lVert \hat{   {\theta}} -    {\theta} \rVert^2_2 
	&\leq C_0 \sum_i \E \lp \hat{p}^*_i - p_i\rp^2 + C_1 \E \lp \hat{P}^*_S - P_S \rp^2\\ 
	& \asymp \frac{d}{n \min\lbp e^\varepsilon, \lp e^\varepsilon -1\rp^2 \rbp},
\end{align*}
where $C_0$ and $C_1$ are some universal constants.

\subsection{$s$-Sparse Bernoulli (ii)}
\subsubsection{High privacy regime}
The lower bound follows in the same way as that of $(i)$ above, except focusing on $\frac{s}{2d}\leq\theta_j\leq\frac{s}{d}$ for each $j=1,\ldots,d$. For the upper bound, we describe a scheme that achieves this error but requires at least sequential interaction between the nodes.
The general idea is to group $\lbp \theta_1,...,\theta_d \rbp$ into subgroups $\mcal{G}_1,..,\mcal{G}_s$, such that $\sum_{i\in\mcal{G}_j} \theta_i = O(1)$. If we can do so, then for each sub-group we apply Part~(i) of Corollary~\ref{cor3} with effective sample size $n'_{j} = n\lba \mcal{G}_j \rba/d$, and the resulting $\ell_2$ estimation error will be
$$ \sum_{j=1}^s \frac{\lba \mcal{G}_j \rba}{n'_j\min\{(e^\varepsilon-1)^2,e^\varepsilon,d\}} = \frac{ds}{n \min\{(e^\varepsilon-1)^2,e^\varepsilon,d\}}. $$

In order to grouping $\lbp \theta_1,...,\theta_d \rbp$, in the first phase we estimate each of them up to precision $1/d$ with the first half of samples. This requires roughly $n \approx d^3/\min{\lp \varepsilon^2, 1 \rp}$ samples. Once we obtain a coarse estimate of each $\theta_j$, in the second phase we perform the mechanism for $1$-Sparse Bernoulli model with the rest of the samples and refine the estimate.

\subsubsection*{Phase~1: grouping parameters}
Let $n' = n/2d$, and by assumption 
$$ n' \geq  10\lp\frac{e^\varepsilon+1}{e^\varepsilon -1}\rp^2 d^2 \log d \log n.  $$
We will use $n'$ samples to estimate $\theta_j, \, \forall j \in [d]$. For the $j$-th component of the $i$-sample $X_i(j) \sim \Ber\lp \theta_j \rp$, let $Y_i(j)$ be the $\varepsilon$-privatized version of it, so 
$$ Y_i(j) \sim \Ber\lp \lp\frac{e^\varepsilon+1}{e^\varepsilon -1}\rp \theta_j + \frac{1}{e^\varepsilon+1} \rp. $$
Therefore, by Hoeffding's inequality we have
\begin{align*}
    \Pr \lbp \lba \hat{\theta}_j\lp Y^{n'}(j)\rp - \theta_j \rba \geq \frac{1}{d} \rbp
    & = \Pr \lbp \lba \frac{1}{n}\sum_{i=1}^{n'}\lp\frac{e^\varepsilon+1}{e^\varepsilon -1}\rp\lp Y_i(j) - \E[Y_i(j)] \rp  \rba \geq \frac{1}{d} \rbp \\
    & \leq 2 \exp\lp -\frac{n'}{2d^2 \lp\frac{e^\varepsilon+1}{e^\varepsilon -1}\rp^2 } \rp \\
    & \leq \frac{1}{d^{10} n}.
\end{align*} 
Let $\mcal{E} = \lbp \exists j, \,\, \lba \hat{\theta}_j - \theta_j \rba \geq 1/d \rbp$ be the event of failure. By union bound, 
$$ \Pr \lbp \mcal{E} \rbp \leq \frac{1}{n d^9}. $$
For the rest of analysis, we will condition on $\mcal{E}^c$. 

Since $\forall j, \, \hat{\theta}_j \leq \theta_j + 1/d$, we must have $\sum_{j}\hat{\theta}_j \leq s+1$, and therefore we can find $s$ groups $\mcal{G}_1,..,\mcal{G}_s$, such that 
$$ \forall k \in [s], \, \sum_{j \in \mcal{G}_k}\hat{\theta}_j \leq 2. $$
On the other hand, $\forall j, \, \theta_j \leq \hat{\theta}_j + 1/d$, so we also have
$$ \forall k \in [s], \, \sum_{j \in \mcal{G}_k}\theta_j \leq 3. $$

\subsubsection*{Phase~2: reducing to $1$-Bernoulli model}
Conditioning on $\mcal{E}^c$ and applying Part~(i) of Corollary~\ref{cor3} for each group $\mcal{G}_j$, with effective sample size $n'_{j} = n\lba \mcal{G}_j \rba/d$, the $\ell_2$ estimation error is upper bounded by
$$ \frac{\lba \mcal{G}_j \rba}{n'_j\min\{(e^\varepsilon-1)^2,e^\varepsilon,d\}} =  \frac{d}{n \min\{(e^\varepsilon-1)^2,e^\varepsilon,d\}}. $$
Summing over $s$ groups yields
\begin{equation}\label{eq:upp_bdd_1}
    \E\lb \lVert \hat{{\theta}} - {\theta} \rVert^2_2 \Big| \mcal{E}^c\rb \asymp \frac{ds}{n \min\{(e^\varepsilon-1)^2,e^\varepsilon,d\}}.
\end{equation}

On the other hand, if phase~1 fails, we have a trivial upper bound:
$$ \lVert \hat{{\theta}} - {\theta} \rVert^2_2 \leq d. $$
Therefore 
\begin{align*}
    \E\lb \lVert \hat{{\theta}} - {\theta} \rVert^2_2\rb 
    & \leq \Pr\lbp \mcal{E} \rbp\E\lb \lVert \hat{{\theta}} - {\theta} \rVert^2_2 \Big| \mcal{E}\rb+\E\lb \lVert \hat{{\theta}} - {\theta} \rVert^2_2 \Big| \mcal{E}^c\rb \\
    & \leq \frac{2}{n d^8} + \E\lb \lVert \hat{{\theta}} - {\theta} \rVert^2_2 \Big| \mcal{E}^c\rb \\
    & \asymp \frac{ds}{n \min\{(e^\varepsilon-1)^2,e^\varepsilon,d\}},
\end{align*}
achieving the desired result.
\subsubsection{Low privacy regime}
Let us first derive the lower bound for estimation error. Restricting our attention to $\frac{s}{2d}\leq\theta_j\leq\frac{s}{d}$, the score for each component of the Bernoulli model \eqref{eq:bern_score} is sub-Gaussian with parameter 
$$ \sigma^2 \leq \frac{c_0d^2}{s^2\log\lp\frac{d}{s}\rp}. $$
This can be checked by letting
\begin{align*}
\sigma = \max\left\{\frac{1}{\theta_j\sqrt{\log\frac{1}{\theta_j}}} \; , \; \frac{1}{(1-\theta_j)\sqrt{\log\frac{1}{(1-\theta_j)}}}  \right\} \; ,
\end{align*}
and then
\begin{align*}
\mathbb{E}\left[e^{\left(\frac{S_{\theta_i}(X_j)}{\sigma}\right)^2}\right] & = \theta_j e^{\left(\frac{1}{\theta_j\sigma}\right)^2} + (1-\theta_j) e^{\left(\frac{1}{(1-\theta_j)\sigma}\right)^2} \\
& \leq 2
\end{align*}
and thus $S_{\theta}(X)$ is $\sigma^2$ sub-Gaussian. Then note that for $\theta_j \leq \frac{1}{2}$ the first term is the maximizer and
$$\sigma^2 = \frac{1}{\theta_j^2\log\frac{1}{\theta_j}} \leq \frac{c_0d^2}{s^2\log\lp\frac{d}{s}\rp}\; .$$
Applying Propositon \ref{prop4}, we obtain the desired lower bound. In the rest of the section, we give an explicit construction of $Q$ and $\hat{  {\theta}}$ and characterize the error for this mechanism.

\subsubsection*{$k$-Randomized Response ($k$-RR) Scheme }
If the support size of the input alphabet is $k$, $k$-RR scheme outputs the input symbol with probability $e^\varepsilon/(k-1+e^\varepsilon)$ and the rest of $k-1$ symbols with probability $1/(k-1+e^\varepsilon)$:
$$
Q(y|x) = 
\begin{cases}
\frac{e^\varepsilon}{(k-1)+e^\varepsilon}, \text{ if } y=x,\\
\frac{1}{(k-1)+e^\varepsilon}, \text{ if } y\neq x.
\end{cases}
$$
$k$-RR scheme works well for low privacy regime, i.e. when $e^\varepsilon$ is large, and will be used later as our privatization mapping. 

In general, as long as $e^\varepsilon \approx k$, the privatization error is with the same order of estimation error, so we can estimate the discrete distribution without increasing additional estimation error by too much.

\subsubsection*{LDP Scheme via Sub-sampling and $k$-RR}

In our problem, for each node we aim to transmit its local observation $  {X}_k$ reliably to the fusion center, and with high probability there will be roughly $s$'s $1$ in $  {X}_k$, so the expected number of possible $  {X}_k$ is roughly ${d \choose s}$. Unfortunately this means we need ${d \choose s} \approx \exp \lp s \log d \rp$ symbols to represent it, and notice that $\varepsilon \leq s\log d$, so we cannot send $  {X}_k$ reliably under $\varepsilon$-local privacy constraint.

To address this issue, we use \textit{sub-sampling} trick to reduce the effective support size. First let $k$ be the largest integer such that
$$ \sum_{0\leq i \leq k} {d \choose i} \leq \exp\lp \varepsilon-10\log d\rp. $$
Notice that $k \asymp \frac{\varepsilon}{\log d}$ since
\begin{align*}
	\sum_{0\leq i \leq k} {d \choose i} \leq d {d \choose k} \leq \exp \lp k\log d + \log d\rp,
\end{align*}
so we must have 
$$ k \geq \frac{\varepsilon}{\log d} -11 \succeq \frac{\varepsilon}{\log d}. $$

Next, for each local observation $  {X}_k$, consider the sub-sampled version $\tilde{  {X}}_k$ as follows:

$$ \tilde{  {X}}_k \eqDef 
\begin{cases}
{X}_k, \text{ if } \lVert   {X} \rVert_1 \leq k,\\
\text{ randomly keep } k \text{'s }1 \text{ in }   {X}, \text{ if } \lVert   {X}_k \rVert_1 > k.
\end{cases}$$
If we let $R_k$ be the reciprocal of sampling rate $\frac{\max\lp \lVert   {X} \rVert_1, k \rp}{k}$, then 
$$ \E\lb \E\lb R_k \cdot \tilde{  {X}}_k(i) \Big| R_k  \rb\rb = \E\lb \E\lb   {X}_k(i) \Big| R_k \rb\rb = \theta_i. $$
Finally, each node transmits $ \tilde{  {X}}_k$ via $k$-RR scheme with privacy level $ \varepsilon$:

$$ Q\lp   {Y} | \tilde{  {X}} \rp = 
\begin{cases}
\frac{e^{\varepsilon}}{(N-1)+e^{\varepsilon}}, \text{ if }   {Y} = \tilde{  {X}},\\
\frac{1}{(N-1)+e^{\varepsilon}}, \text{ if }  {Y} \neq \tilde{X},
\end{cases}
$$
where $N$ is the number of possible $\tilde{ {X}}_k$:
$$ N \eqDef \sum_{i \leq k} {d \choose i} \leq \exp\lp \varepsilon \rp/d^{10}. $$
We also use $p_e$ to denote the probability of privatization error, i.e.

$$ p_e \eqDef \Pr\lbp Q\lp   {Y} | \tilde{  {X}} \rp \neq \tilde{  {X}} \rbp = \frac{N-1}{(N-1)+e^{\varepsilon}} \leq \frac{1}{d^{10}}. $$

Now we compute $\Pr\lbp   {Y}_k(i) = 1 | R_k\rbp$ for some node $k$:
\begin{align*}
	\Pr\lbp   {Y}_k(i) = 1 |R_k \rbp  = &\Pr\lbp \tilde{  {X}}_k(i) = 1 \cap   {Y}_k = \tilde{  {X}}_k |R_k\rbp \\
	+&\Pr\lbp   {Y}_k(i) = 1 \cap \tilde{  {X}}_k(i) = 1 \cap   {Y}_k \neq \tilde{  {X}}_k | R_k\rbp \\
	+&\Pr\lbp   {Y}_k(i) = 1 \cap \tilde{  {X}}_k(i) \neq 1 \cap   {Y}_k \neq \tilde{  {X}}_k | R_k\rbp \\
\end{align*}
It not hard to see that the first term is $ (1-p_e) \cdot \E\lb \tilde{  {X}}_k(i)|R_k \rb$, and we further derive the second and the third terms:
\begin{align*}
	&\Pr\lbp   {Y}_k(i) = 1 \cap \tilde{  {X}}_k(i) = 1 \cap   {Y}_k \neq \tilde{  {X}}_k | R_k\rbp \\
	& = p_e\cdot\E\lb \tilde{  {X}}_k(i)|R_k \rb\cdot\Pr\lbp   {Y}_k(i) = 1 | \tilde{  {X}}_k(i) = 1,   {Y}_k \neq \tilde{  {X}}_k, R_k\rbp \\
	& = p_e\cdot\E\lb \tilde{  {X}}_k(i)|R_k\rb \frac{\sum_{1<i\leq k} {d-1 \choose i-1}}{\sum_{0 \leq i\leq k} {d \choose i}-1}, \\
\end{align*}
and 
\begin{align*}
	&\Pr\lbp   {Y}_k(i) = 1 \cap \tilde{  {X}}_k(i) \neq 1 \cap   {Y}_k \neq \tilde{  {X}}_k | R_k\rbp \\
	& = p_e\cdot\lp 1-\E\lb \tilde{  {X}}_k(i)|R_k \rb\rp \cdot\Pr\lbp   {Y}_k(i) = 1 | \tilde{  {X}}_k(i) \neq 1,   {Y}_k \neq \tilde{  {X}}_k, R_k\rbp \\
	& = p_e\cdot\lp 1-\E\lb \tilde{  {X}}_k(i)|R_k\rb\rp \frac{\sum_{1\leq i\leq k} {d-1 \choose i}}{\sum_{0 \leq i\leq k} {d \choose i}-1}. \\
\end{align*}
Summing the three terms together, we have 
\begin{equation}\label{eq:def_A_B}
	\Pr\lbp   {Y}_k(i) = 1 |R_k \rbp = A\cdot\E\lb \tilde{  {X}}_k(i)|R_k\rb + B, 
\end{equation}
for some known constants $A$ and $B$.
Notice that $1 \geq A \geq (1-2p_e) \geq (1-\frac{2}{d^{10}})$, and $B \leq p_e \leq \frac{1}{d^{10}}$, which implies $\E\lb   {Y}_k(i) |R_k \rb \approx \E\lb \tilde{  {X}}_k(i)|R_k\rb$.

If we know $R_k$, 
then $ \hat{\theta_i} \eqDef R_k\cdot\frac{  {Y}_k-B}{A}$ is an unbiased estimator of $\theta_i$ since
\begin{align*}
	\E\hat{\theta_i} 
	& = \E \lb R_k\cdot\frac{  {Y}_k(i)-B}{A}\rb = \E \lb R_k \E \lb\frac{  {Y}_k(i)-B}{A} \Big| R_k \rb\rb\\
	& =\E \lb \E \lb   {X}_k(i) \Big| R_k \rb\rb = \E   {X}_k(i) = \theta_i. 
\end{align*} 

Unfortunately, we cannot directly obtain $R_k$ since otherwise the LDP constraint will be violated. So instead, we replace $R_k$ with an estimate of $\E\lb R_k \rb$, denoted as $\hat{\mu_{R}}$, in our estimator:
$$ \hat{\theta_i}({Y}_k) \eqDef \hat{\mu}_R\cdot\frac{ {Y}_k(i)-B}{A}.$$
Notice that 
\begin{align*}
	\E\lb R_i \rb = \E\lb\frac{\max\lp \lVert X_i \rVert_1, k \rp}{k}\rb 
	& = \E\lb \frac{\lVert X_i\rVert_1}{k} \rb + \E\lb \frac{\max\lp k - \lVert X_i \rVert_1, 0 \rp}{k} \rb \\ 
	& = \frac{s}{k}-\E\lb \frac{\max\lp k-\lVert X_i \rVert_1, 0 \rp}{k} \rb. 
\end{align*}
Since $\E\lb \frac{\max\lp k - \lVert X_i \rVert_1, 0 \rp}{k} \rb$ is bounded by $1$ and $\varepsilon = \Omega(1)$, by using first $\theta(s) = o(n)$ of samples, we can estimate $\E[R_i]$ to precision $1/s$ privately, i.e. $\E\lb \lp \hat{\mu}_R - \E\lb R_i \rb \rp^2 \rb \leq \frac{1}{s}$.

%Notice that  $\hat{\theta_i}$ is still an unbiased estimator because we only add a zero-mean independent noise on $R_k$. Since now both $  {Y}_k$ and $\tilde{R}_k$ are transmitted through two $\frac{\varepsilon}{2}$-LDP schemes, by the composition theorem of differential privacy, the overall scheme is $\varepsilon$-LDP.

Our final estimator is the aggregation of $  {Y}_1,...,  {Y}_n$:
$$ \hat{  {\theta}}\lp   {Y}_1,...,   {Y}_n \rp \eqDef \frac{1}{n}\sum_{k=1}^n \hat{\mu}_R \frac{  {Y}_k - B}{A}, $$
where $A, B$ are constants defined in \eqref{eq:def_A_B} and are independent of $  {\theta}$. It remains to show that for this mechanism and estimator,
$$\mathbb{E}\|\hat\theta(Y_1,\ldots,Y_n)-\theta\|_2^2 \preceq \frac{s^2 \log d}{n\varepsilon} \; .$$

\subsubsection*{Analysis of $\ell_2$ error}
Now let us analyze the $\ell_2$ error of $\hat{  {\theta}}$. As stated in previous section, $\hat{\theta}_i$ is unbiased to $\theta_i$, so

\begin{align*}
    &\E\lb\lp \frac{1}{n}\sum_{k=1}^n \hat{\mu}_R\frac{  {Y}_k(i) - B}{A}-\theta_i \rp^2\rb \\
    & =  \frac{1}{n^2}\sum_{k=1}^n \E\lb\lp \hat{\mu}_R\frac{  {Y}_k(i) - B}{A}-\theta_i \rp^2\rb \\
    & =  \frac{1}{n}\E\lb\lp \hat{\mu}_R\frac{  {Y}_1(i) - B}{A}-\theta_i \rp^2\rb \\
    &\leq  \frac{3}{n}  \underbrace{\E\lb\lp \hat{\mu}_R\frac{  {Y}_1(i) - B}{A}- \hat{\mu}_R\tilde{  {X}}_1(i)\rp^2\rb}_{\text{(a)}}+
   \frac{3}{n} \underbrace{\E\lb\lp \hat{\mu}_R\tilde{  {X}}_1(i) - R_1\tilde{  {X}}_1(i)\rp^2\rb}_{\text{(b)}}\\
     &  \hphantom{\leq  \frac{3}{n} \underbrace{\E\lb\lp \hat{\mu}_R\frac{  {Y}_1(i) - B}{A}- \hat{\mu}_R\tilde{  {X}}_1(i)\rp^2\rb}_{\text{(a)}}+++}
     +\frac{3}{n}\underbrace{\E\lb\lp R_1\tilde{  {X}}_1(i) - \theta_i\rp^2\rb}_{\text{(c)}}
    .
\end{align*}
Note that (a) and (b) can be viewed as \textit{privatization} errors due to the LDP constraint, and (c) is the estimation error. We bound (a), (b) and (c) separately. 

To bound (a), we leverage the following facts
\begin{itemize}
    \item $ 1 > A \geq (1-\frac{2}{d^{10}})$,
    \item $ 0 < B < \frac{1}{d^{10}}$,
    \item $\Pr\lbp \tilde{  {X}}_1(i)=   {Y}_1(i)\rbp \geq 1- p_e \geq 1- \frac{1}{d^{10}}$.
\end{itemize}
 
\begin{align*}
    &\E\lb\lp \hat{\mu}_R\frac{  {Y}_1(i) - B}{A}- \hat{\mu}_R\tilde{  {X}}_1(i)\rp^2\rb\\
    & = \E\lb\lp\hat{\mu}_R\rp^2\rb \cdot \E\lb\E \lb\lp \frac{  {Y}_1(i) - B}{A}- \tilde{  {X}}_1(i)\rp^2 \Big| R_1 \rb\rb \\
    & \leq \lp\hat{\mu}_R\rp^2
    \Pr\lbp \tilde{  {X}}_1(i)=   {Y}_1(i)\rbp\cdot\lp \lp \frac{1-B-A}{A}\rp^2+ \lp\frac{B}{A}\rp^2\rp \\
    &+ \E\lb\lp\hat{\mu}_R\rp^2\rb
    \Pr\lbp \tilde{  {X}}_1(i)\neq   {Y}_1(i)\rbp\cdot\lp \lp \frac{A-B}{A}\rp^2+ \lp\frac{A+B}{A}\rp^2\rp\\
    & \leq \E\lb\lp\hat{\mu}_R\rp^2\rb\lp\lp \lp \frac{1-B-A}{A}\rp^2+ \lp\frac{B}{A}\rp^2\rp+ p_e \lp \lp \frac{A-B}{A}\rp^2+ \lp\frac{A+B}{A}\rp^2\rp\rp \\
    & \leq \frac{C_0}{d^{10}} \E\lb\lp\hat{\mu}_R\rp^2\rb. \\
\end{align*}
Note that, $\E\lb\lp\hat{\mu}_R\rp^2\rb \leq \E \lb R_1^2\rb + \Var\lp \hat{\mu}_R \rp  \leq 2d^2$, so term (a) can eventually be bounded by 
$$ \text{(a)} \leq \frac{C_0}{d^{8}}, $$ %\leq \frac{C_0}{d^9 \lp \log d\rp^2}, $$
for some constant $C_0$.
To bound (b), observe that 
\begin{align*}
    \sum_{i=1}^d\E\lb\lp \hat{\mu}_R\tilde{  {X}}_1(i) - R_1\tilde{  {X}}_1(i)\rp^2\rb 
    & = \sum_{i=1}^d\E\lb\tilde{  {X}}_1(i)^2\cdot\lp \hat{\mu}_R - R_1\rp^2\rb\\
    & = \E\lb  \lp \sum_{i=1}^d\tilde{{X}}_1(i)\rp \cdot\lp \hat{\mu}_R - R_1\rp^2\rb\\
    & = k \cdot \E\lb \lp \hat{\mu}_R - R_1\rp^2\rb\\
    & \leq 2k \cdot \lp \E\lb \lp \hat{\mu}_R - \E\lb R_1 \rb\rp^2 \rb + \frac{1}{k^2}\Var\lp \max{\lp k, \left\| X_1 \right\|_1\rp}\rp\rp\\
    & \overset{\text{(1)}}{\leq} \frac{2k}{s}+\frac{2}{k} \cdot \Var\lp \left\| X_1 \right\|_1\rp\\
    & \overset{\text{(2)}}{\leq}  C_1 s/k,
\end{align*}
where \text{(1)} is due to  $\hat{\mu}_R$ is of precision $O(1/s)$ by using first $o(n)$ samples, and \text{(2)} is because $k \asymp \varepsilon/\log d \preceq s$.
Finally we bound (c) as follow:
\begin{align*}
    \E\lb\lp R_1\tilde{  {X}}_1(i) - \theta_i\rp^2\rb
    & = \E\lb R_1^2\tilde{  {X}}^2_1(i)\rb  - \theta_i^2 \\
    & \leq \E\lb\E \lb R_1^2\tilde{  {X}}_1(i) \Big| R_1 \rb\rb \\
    & = \E\lb\E \lb R_1   {X}_1(i) \Big| R_1 \rb\rb \\
    & \leq \E\lb \lp \frac{\lVert   {X}_1\rVert_1}{k}+1\rp   {X}_1(i) \rb \\
    & \leq \E\lb \frac{\lVert   {X}_1\rVert_1}{k}  {X}_1(i) \rb+ \theta_i.
\end{align*}

Combining (a), (b) and (c) and summing across all dimensions $i=1,...,d$, we obtain
\begin{align*}
    \E\lVert \hat\theta -  \theta\rVert_2^2
    & = \sum_{i=1}^d \E\lb \lp \hat{\theta}_i - \theta_i \rp^2 \rb \\
    & \leq \frac{1}{n}\lp C_0'\frac{1}{\lp d^7 \rp^2} + C_1' \frac{s}{k} + C_2' \lp \frac{\E \lb \lVert{X}_1\rVert^2_1\rb}{k} +s\rp \rp \\
    & \asymp \frac{s^2\log d}{n \varepsilon},
\end{align*}
where in the last step we bound the second moment of Poisson binomial distribution by 
$$ \E\lVert   {X}_1 \rVert^2_1 = \lp \sum_i \theta_i\rp^2 +\sum_i \theta_i(1-\theta_i) \leq s^2 + s,$$
and observe that $k$ is $\Omega\lp \frac{\varepsilon}{\log d}\rp$.

\section*{Acknowledgements}
This work was supported in part by NSF award CCF-1704624 and by the Center for Science of Information (CSoI), an NSF Science and Technology Center, under grant agreement CCF-0939370.

\bibliographystyle{IEEEtran}
\bibliography{../di.bib}

% Generated by IEEEtran.bst, version: 1.14 (2015/08/26)
\newcommand{\noopsort}[1]{}
\begin{thebibliography}{10}
\providecommand{\url}[1]{#1}
\csname url@samestyle\endcsname
\providecommand{\newblock}{\relax}
\providecommand{\bibinfo}[2]{#2}
\providecommand{\BIBentrySTDinterwordspacing}{\spaceskip=0pt\relax}
\providecommand{\BIBentryALTinterwordstretchfactor}{4}
\providecommand{\BIBentryALTinterwordspacing}{\spaceskip=\fontdimen2\font plus
\BIBentryALTinterwordstretchfactor\fontdimen3\font minus
  \fontdimen4\font\relax}
\providecommand{\BIBforeignlanguage}[2]{{%
\expandafter\ifx\csname l@#1\endcsname\relax
\typeout{** WARNING: IEEEtran.bst: No hyphenation pattern has been}%
\typeout{** loaded for the language `#1'. Using the pattern for}%
\typeout{** the default language instead.}%
\else
\language=\csname l@#1\endcsname
\fi
#2}}
\providecommand{\BIBdecl}{\relax}
\BIBdecl

\bibitem{warner}
S.~L. Warner, ``Randomized response: A survey technique for eliminating evasive
  answer bias,'' \emph{Journal of the American Statistical Association},
  vol.~60, no. 309, pp. 63--69, 1965.

\bibitem{dwork1}
C.~Dwork, F.~McSherry, K.~Nissim, and A.~Smith, ``Calibrating noise to
  sensitivity in private data analysis,'' in \emph{Theory of Cryptography
  Conference}, S.~Halevi and T.~Rabin, Eds.\hskip 1em plus 0.5em minus
  0.4em\relax Springer, Berlin, Heidelberg, 2006.

\bibitem{dwork2}
C.~Dwork and J.~Lei, ``Differential privacy and robust statistics,'' in
  \emph{Proceedings of the 41st annual ACM symposium on Theory of
  computing}.\hskip 1em plus 0.5em minus 0.4em\relax ACM, 2009, pp. 371--380.

\bibitem{dwork3}
C.~Dwork, ``Differential privacy,'' \emph{Automata}, pp. 1--12, 2016.

\bibitem{Cramer}
H.~Cram\'er, \emph{Mathematical Methods of Statistics}.\hskip 1em plus 0.5em
  minus 0.4em\relax Princeton Univ. Press, 1946.

\bibitem{Rao}
C.~R. Rao, ``Information and the accuracy attainable in the estimation of
  statistical parameters,'' \emph{Bulletin of the Calcutta Mathematical
  Society}, vol.~37, 1945.

\bibitem{Cover--Thomas2006}
T.~M. Cover and J.~A. Thomas, \emph{Elements of Information Theory},
  2nd~ed.\hskip 1em plus 0.5em minus 0.4em\relax New York: Wiley, 2006.

\bibitem{Tsybakov2008}
A.~Tsybakov, \emph{Introduction to Nonparametric Estimation}.\hskip 1em plus
  0.5em minus 0.4em\relax Springer-Verlag, 2008.

\bibitem{Hajek1972local}
J.~H{\'a}jek, ``Local asymptotic minimax and admissibility in estimation,'' in
  \emph{Proceedings of the sixth Berkeley symposium on mathematical statistics
  and probability}, vol.~1, 1972, pp. 175--194.

\bibitem{Vandervaart2000}
A.~W. Van~der Vaart, \emph{Asymptotic statistics}.\hskip 1em plus 0.5em minus
  0.4em\relax Cambridge university press, 2000, vol.~3.

\bibitem{gill}
R.~D. Gill and B.~Y. Levit, ``Applications of the van trees inequality: a
  {B}ayesian cram\'er-rao bound,'' \emph{Bernoulli}, vol.~1, no. 1/2, pp.
  059--079, 1995.

\bibitem{duchi2013local}
J.~C. Duchi, M.~I. Jordan, and M.~J. Wainwright, ``Local privacy and
  statistical minimax rates,'' in \emph{2013 IEEE 54th Annual Symposium on
  Foundations of Computer Science}.\hskip 1em plus 0.5em minus 0.4em\relax
  IEEE, 2013, pp. 429--438.

\bibitem{yebarg2018}
M.~Ye and A.~Barg, ``Optimal schemes for discrete distribution estimation under
  locally differential privacy,'' \emph{IEEE Transactions on Information
  Theory}, vol.~64, no.~8, 2018.

\bibitem{duchi2019lower}
J.~Duchi and R.~Rogers, ``Lower bounds for locally private estimation via
  communication complexity,'' in \emph{Proceedings of the 32nd Conference On
  Learning Theory}, vol.~99.\hskip 1em plus 0.5em minus 0.4em\relax PMLR, 2019,
  pp. 1161--1191.

\bibitem{allerton}
L.~P. Barnes, Y.~Han, and A.~\"Ozg\"ur, ``A geometric characterization of
  fisher information from quantized samples with applications to distributed
  statistical estimation,'' in \emph{Proceedings of the 56th Annual Allerton
  Conference on Communication, Control, and Computing}, 2018.

\bibitem{isit}
------, ``Fisher information for distributed estimation under a blackboard
  communication protocol,'' \emph{Proceedings of the IEEE International
  Symposium on Information Theory (ISIT), Paris, France}, 2019.

\bibitem{barnes}
------, ``Lower bounds for learning distributions under communication
  constraints via fisher information,'' \emph{\emph{arXiv preprint},
  \emph{arXiv}:1902.02890}, 2019.

\bibitem{archayaetal2}
J.~Acharya, C.~Canonne, and H.~Tyagi, ``Distributed simulation and distributed
  inference,'' \emph{\emph{arXiv preprint}, \emph{arXiv}:1804.06952}, 2018.

\bibitem{archayaetal}
J.~Acharya, C.~L. Canonne, and H.~Tyagi, ``Inference under information
  constraints: Lower bounds from chi-square contraction,'' in \emph{Proceedings
  of the 32nd Conference on Learning Theory}, vol.~99.\hskip 1em plus 0.5em
  minus 0.4em\relax Phoenix, USA: PMLR, 25--28 Jun 2019, pp. 3--17.

\bibitem{ruan}
F.~Ruan and J.~C. Duchi, ``The right complexity measure in locally private
  estimation: It is not the fisher information,'' \emph{\emph{arXiv preprint},
  \emph{arXiv}:1806.05756}, 2018.

\bibitem{rohde}
A.~Rohde and L.~Steinberger, ``Geometrizing rates of convergence under local
  differential privacy constraints,'' \emph{\emph{arXiv preprint},
  \emph{arXiv}:1805.01422}, 2018.

\bibitem{borovkov}
A.~A. Borovkov, \emph{Mathematical Statistics}.\hskip 1em plus 0.5em minus
  0.4em\relax Gordon and Breach Science Publishers, 1998.

\bibitem{versh}
R.~Vershynin, ``Introduction to the non-asymptotic analysis of random
  matrices,'' \emph{arXiv preprint \emph{arXiv}:1011.3027}, 2010.

\bibitem{kairouz16}
P.~Kairouz, K.~Bonawitz, and D.~Ramage, ``Discrete distribution estimation
  under local privacy,'' in \emph{Proceedings of The 33rd International
  Conference on Machine Learning}, vol.~48, New York, New York, USA, 20--22 Jun
  2016, pp. 2436--2444.

\end{thebibliography}

\newpage

\begin{appendix}

\subsection{Proof of Proposition \ref{prop1}} \label{app1}
Fix some $\delta>0.$ Suppose, for contradiction, that $$\frac{Q(y|x)}{Q(y|x')} > e^\varepsilon+\delta$$ for all $y\in S$ with $\nu(S)>0$. We have
\begin{align*}
\frac{Q(S|x)}{Q(S|x')} = \frac{\int_S Q(y|x)d\nu(y)}{\int_S Q(y|x')d\nu(y)} \geq \inf_{y\in S} \frac{Q(y|x)}{Q(y|x')} \geq e^\varepsilon + \delta \; .
\end{align*}
This contradicts $Q(\cdot|\cdot)$ being an $\varepsilon$-differentially private mechanism, and thus we must have $$\frac{Q(y|x)}{Q(y|x')}\leq e^\varepsilon + \delta$$
for $\nu$-almost all $y$. Taking $\delta \to 0$ and using the measure's continuity from above completes the proof.

\subsection{Proof of Lemma \ref{lem1}} \label{app2}
\begin{align} \label{eq:lem1}
\mathsf{Tr}(I_Y(\theta)) & = \sum_{i=1}^d \mathbb{E}\left[\left(\frac{\partial}{\partial\theta_i}\log f(Y|\theta)\right)^2\right] \nonumber\\
& = \sum_{i=1}^d \mathbb{E}\left[\left(\frac{\frac{\partial}{\partial\theta_i}f(Y|\theta)}{f(Y|\theta)}\right)^2\right] \nonumber\\
& = \sum_{i=1}^d \mathbb{E}\left[\left(\frac{\int Q(Y|x)\frac{\partial}{\partial\theta_i} f(x|\theta)d\mu(x)}{f(Y|\theta)}\right)^2\right] \\
& = \sum_{i=1}^d \mathbb{E}\left[\left(\int\frac{Q(Y|x)f(x|\theta)}{f(Y|\theta)}\frac{\frac{\partial}{\partial\theta_i} f(x|\theta)}{f(x|\theta)}d\mu(x)\right)^2\right] \nonumber \\
& = \sum_{i=1}^d \mathbb{E}_Y \mathbb{E}_X\left[\frac{\partial}{\partial\theta_i}\log f(x|\theta) \bigg| Y\right]^2  \nonumber \\
& = \mathbb{E}_Y\|\mathbb{E}_X[S_\theta(X)|Y]\|_2^2 \nonumber \; .
\end{align}
The key step \eqref{eq:lem1} relies on interchanging integration over the sample space and differentiation with respect to the components $\theta_j$ which can be made precise via Lebesgue's Dominated Convergence Theorem as shown in Appendix \ref{app:reg} regarding regularity conditions.

\subsection{Proof of Corollary \ref{cor2}} \label{app3}
Without loss of generality we focus on a subset $\Theta'\subset\Theta$ defined by $$\Theta' = \left\{\theta\in\Theta \; : \; \frac{1}{4d}\leq\theta_i\leq\frac{1}{2d} \text{ for } i=1,\ldots,d\right\} \; ,$$
and only consider the error from the first $d$ components of $\theta_i$. 
We can do this because
$$\max_{\theta\in\Theta} \mathbb{E}\|\hat\theta(Y_1,\ldots,Y_n) - \theta\|_2^2 \geq \max_{\theta\in\Theta'}\left[ \mathbb{E}\sum_{i=1}^d\left(\hat\theta_i-\theta_i\right)^2\right] \; .$$
It remains to show that for all $\theta\in\Theta'$ and unit vectors $u\in\mathbb{R}^d$, $$\mathbb{E}[\langle u,S_\theta(X)\rangle^2] \leq 6d$$
where
\begin{align*}
S_\theta(X) & = (S_{\theta_1}(x),\ldots,S_{\theta_d}(x)) \\
& = \left(\frac{\partial}{\partial\theta_1}\log f(x|\theta),\ldots,\frac{\partial}{\partial\theta_d}\log f(x|\theta)\right)
\end{align*}
is the score from just the first $d$ components.
To see this note that
$$\theta_{d+1} = 1-\sum_{i=1}^{d} \theta_i, $$
and
$$S_{\theta_i}(x) = \begin{cases} \frac{1}{\theta_i}, & \; x=i \\ -\frac{1}{\theta_{d+1}}, & \; x=d+1 \\ 0, & \; \text{otherwise} \end{cases}$$
for $i=1,\ldots,d$. Then for any unit vector $u=(u_1,\ldots,u_{d})$,
\begin{align*}
\mathbb{E}[\langle u,S_\theta(X)\rangle^2] & = \sum_{x=1}^{d+1} \theta_x \left( \sum_{i=1}^{d}u_iS_{\theta_i}(x)\right)^2 \nonumber \\
& = \theta_{d+1}\frac{1}{\theta_{d+1}^2}\left(\sum_{i=1}^d u_i \right)^2 + \sum_{x=1}^{d} \theta_x \left( \sum_{i=1}^{d}u_iS_{\theta_i}(x)\right)^2 \nonumber \\
& \leq 2d + \sum_{x=1}^d \theta_x u_x^2\frac{1}{\theta_x^2} \leq 6d \; .
\end{align*}
The corollary then follows by applying Propositions \ref{prop2} and \ref{prop3} with $I_0 = 6d$ to equation \eqref{eq:vt2}.

%\subsection{Sparse Bernoulli upper bound}
%\label{lower_bound_app}
%\input{../sec_low_sparsity_app.tex}
%\subsection{$s$-Sparse Bernoulli upper bound}
%\label{s_lower_bound_app}
%\input{../sec_general_sparsity_app.tex}

\subsection{Regularity Conditions}\label{app:reg}
We make the following assumptions on the statistical model $P_\theta$:
\begin{itemize}
    \item[(i)] The density $f(x|\theta)$ is such that $\sqrt{f(x|\theta)}$ is continuously differentiable with respect to $\theta_j$ for $j=1,\ldots,d$ and $\mu$-almost all $x\in\mathcal{X}$. Note that this is the same as assuming that the density $f(x|\theta)$ itself is continuously differentiable if we assume that $f(x|\theta)>0$, and this positivity assumption can always be made valid by considering all integrals to only be over the subset of $\mathcal{X}$ with $f(x|\theta)>0$.
    \item[(ii)] The Fisher information for each component $I_X(\theta_j) = \mathbb{E}\left[\left(\frac{\partial}{\partial\theta_j}\log f(x|\theta)\right)^2\right]$ exists and is a continuous function of $\theta_j$ for each $j=1,\ldots,d$.
\end{itemize}
It can easily be checked that for the Gaussian location model, discrete distribution estimation, and sparse Bernoulli models these conditions are met for an appropriate subset of the space of possible parameter values $\Theta$.

These conditions are relatively standard sufficient conditions for a statistical model to be \emph{differentiable in quadratic mean} \cite{Vandervaart2000}. Unfortunately the differentiable in quadratic mean condition itself is not appropriate for developing Cram\'er-Rao type lower bounds, and so it will not work for our purposes. One important aspect of these conditions is that we make assumptions on the statistical model $P_\theta$, but not $Q_\theta$, so that there are no implicit assumptions on the privacy mechanism.

\begin{lem} \label{lem:reg}
Under the conditions above, $f(y|\theta)$ is continuously differentiable with respect to $\theta_j$ and
\begin{align*}
\frac{\partial}{\partial\theta_j}f(y|\theta) & = \frac{\partial}{\partial\theta_j}\int Q(y|x)f(x|\theta)d\mu(x) \\
& = \int Q(y|x)\frac{\partial}{\partial\theta_j}f(x|\theta)d\mu(x)
\end{align*}
at $\nu$-almost any $y$.
\end{lem}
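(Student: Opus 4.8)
The plan is to recognize the claim as an instance of differentiation under the integral sign, and to supply the hypotheses needed for the Leibniz/dominated-convergence argument by combining the privacy constraint of Proposition~\ref{prop1} with the finite-Fisher-information assumption (ii). The role of Proposition~\ref{prop1} is to decouple the $y$-dependence of the kernel: for $\nu$-almost every $y$, writing $c_{\min}(y)=\inf_x Q(y|x)$, the bound $Q(y|x)\le e^\varepsilon Q(y|x')$ gives $Q(y|x)\le e^\varepsilon c_{\min}(y)$ uniformly in $x$, with $c_{\min}(y)<\infty$ a.e. This reduces every $x$-integral I will need to an integral of $|\frac{\partial}{\partial\theta_j}f(x|\theta)|$ against $\mu$ times the finite constant $e^\varepsilon c_{\min}(y)$. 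So the first step is to record that this $x$-integral is finite: since $\frac{\partial}{\partial\theta_j}f=2\sqrt{f}\,\frac{\partial}{\partial\theta_j}\sqrt{f}$, Cauchy--Schwarz gives $\int|\frac{\partial}{\partial\theta_j}f(x|\theta)|\,d\mu\le 2\|\sqrt{f(\cdot|\theta)}\|_{2}\,\|\frac{\partial}{\partial\theta_j}\sqrt{f(\cdot|\theta)}\|_{2}=\sqrt{I_X(\theta_j)}<\infty$, using $\int f\,d\mu=1$ and assumption (ii). Hence $\int Q(y|x)\frac{\partial}{\partial\theta_j}f(x|\theta)\,d\mu(x)$ is well defined for a.e.\ $y$.

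Then, rather than seeking a single dominating function valid uniformly over a neighborhood of $\theta$ (which is awkward from $L^2$ bounds alone), I would use the fundamental-theorem-of-calculus route. Fixing $y$ and writing $f_t=f(\cdot|\theta+te_j)$, the increment satisfies $f(y|\theta+he_j)-f(y|\theta)=\int Q(y|x)\int_0^h\frac{\partial}{\partial\theta_j}f_t(x)\,dt\,d\mu(x)$, and the double integral is absolutely convergent because $Q(y|x)\le e^\varepsilon c_{\min}(y)$ and $t\mapsto\int|\frac{\partial}{\partial\theta_j}f_t|\,d\mu\le\sqrt{I_X(\theta_j+t)}$ is bounded on $[0,h]$ by continuity of the Fisher information. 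Fubini then yields $f(y|\theta+he_j)-f(y|\theta)=\int_0^h\phi(t)\,dt$ with $\phi(t)=\int Q(y|x)\frac{\partial}{\partial\theta_j}f_t(x)\,d\mu(x)$. If $\phi$ is continuous, this identity shows at once that $f(y|\cdot)$ is differentiable in $\theta_j$ with derivative $\phi(0)$, which is exactly the claimed interchange, and that this derivative is continuous.

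The crux, therefore, is the continuity of $\phi$, and this is where I expect the main work to lie. The idea is to upgrade the pointwise convergence $\frac{\partial}{\partial\theta_j}\sqrt{f_t}\to\frac{\partial}{\partial\theta_j}\sqrt{f_0}$ (from continuous differentiability of $\sqrt{f}$, assumption (i)) to $L^2(\mu)$ convergence. Since $\|\frac{\partial}{\partial\theta_j}\sqrt{f_t}\|_2^2=\tfrac14 I_X(\theta_j+t)\to\tfrac14 I_X(\theta_j)=\|\frac{\partial}{\partial\theta_j}\sqrt{f_0}\|_2^2$ by the continuity in (ii), pointwise a.e.\ convergence together with convergence of the $L^2$ norms forces strong $L^2$ convergence (a Riesz/Scheff\'e-type argument). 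Likewise $\sqrt{f_t}\to\sqrt{f_0}$ in $L^2$, both norms being $\int f\,d\mu=1$. Writing $\phi(t)-\phi(0)=2\int Q(y|x)\big(\sqrt{f_t}\,\tfrac{\partial}{\partial\theta_j}\sqrt{f_t}-\sqrt{f_0}\,\tfrac{\partial}{\partial\theta_j}\sqrt{f_0}\big)\,d\mu$ and bounding $Q(y|x)\le e^\varepsilon c_{\min}(y)$, the remaining $x$-integral tends to $0$ because a product of two $L^2$-convergent sequences converges in $L^1$ (add and subtract a cross term and apply Cauchy--Schwarz). This gives $\phi(t)\to\phi(0)$, completing the argument. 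The one point requiring care is the measure-theoretic bookkeeping of ``$\nu$-almost every $y$'' (finiteness of $c_{\min}(y)$ and the null sets on which $\frac{\partial}{\partial\theta_j}\sqrt f$ fails to exist), which is routine given assumptions (i)--(ii).
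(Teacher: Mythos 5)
Your argument is correct, and while it shares the paper's overall skeleton---express the increment of $f(y|\cdot)$ via the fundamental theorem of calculus using $\frac{\partial}{\partial\theta_j}f = 2\sqrt{f}\,\frac{\partial}{\partial\theta_j}\sqrt{f}$, control the kernel $Q(y|\cdot)$ through Proposition~\ref{prop1}, and lean on the continuity of the Fisher information in assumption (ii) to prevent mass from escaping in the limit---it implements the crucial limit interchange by a genuinely different mechanism. The paper constructs explicit truncation sets $A_h$ on which $\sqrt{f(x|v)}$ and $|\sqrt{f(x|v)}'|$ are dominated by twice their values at $\theta$, applies dominated convergence on $A_h$ (the dominating function being integrable by Cauchy--Schwarz together with $\int Q(y|x)^2f(x|\theta)d\mu(x)\le e^\varepsilon f(y|\theta)^2$), and then kills the contribution of $A_h^C$ by Cauchy--Schwarz plus the fact that $\int_0^1 I_X(\theta+uh)\,du \to I_X(\theta)$. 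You instead upgrade the pointwise convergence of $\partial_j\sqrt{f_t}$ and $\sqrt{f_t}$ to strong $L^2(\mu)$ convergence via the Riesz--Scheff\'e principle (a.e.\ convergence plus convergence of norms implies norm convergence), where the norm-convergence hypothesis is exactly assumption (ii), and then get $L^1$ convergence of the products by adding and subtracting a cross term. What your route buys: it is more modular (no ad hoc truncation sets), and it delivers the \emph{continuous} differentiability claim automatically, since the derivative is identified with the function $\phi$, which you prove continuous at every $t$; the paper's proof as written establishes the derivative formula and leaves the continuity of $\theta\mapsto\frac{\partial}{\partial\theta_j}f(y|\theta)$ to an implicit repetition of the same argument. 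What the paper's route buys: it is elementary and self-contained, using nothing beyond dominated convergence and Cauchy--Schwarz. One caveat applies equally to both proofs: the uniform-in-$x$ bound $Q(y|x)\le e^\varepsilon c_{\min}(y)$ (and likewise the paper's $c_{\max}(y)\le e^\varepsilon c_{\min}(y)$) promotes the $\nu$-null set of Proposition~\ref{prop1}, which a priori depends on the pair $(x,x')$, to one uniform over $x$; you flag this bookkeeping explicitly, and it is resolved the same way in either argument, e.g.\ by integrating the a.e.\ inequality over $x'$ against $f(x'|\theta)d\mu(x')$ and invoking Fubini, which yields $Q(y|x)\le e^\varepsilon f(y|\theta)$ and suffices wherever the sup bound is used.
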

\begin{proof}
For simplicity we consider the scalar case with $d=1$. The proof for each component of the vector case is identical. Without yet knowing if the limit exists, formally we have
\begin{align*}
\frac{\partial}{\partial\theta}f(y|\theta) & = \lim_{h\to0}\frac{1}{h}\left(\int Q(y|x)f(x|\theta+h)d\mu(x) - \int Q(y|x)f(x|\theta)d\mu(x)\right) \\
& = \lim_{h\to0} \int Q(y|x)\int_0^1 f'(x|\theta+hu)du \, d\mu(x) \\
& = \lim_{h\to0} 2\int \int_0^1 Q(y|x)\sqrt{f(x|\theta+hu)}\sqrt{f(x|\theta+hu)}'du \, d\mu(x) \; .
\end{align*}
For each $h$ define the set
$$A_h = \left\{ x\in\mathcal{X} \; : \; \sup_{v:|\theta-v|<h} \sqrt{f(x|v)} < 2\sqrt{f(x|\theta)} \quad , \quad \sup_{v:|\theta-v|<h} |\sqrt{f(x|v)}'| < 2|\sqrt{f(x|\theta)}'| \right\}$$
and split the integral into two terms, considering the set $A_h$ and its complement $A_h^C$ separately:
\begin{align}
\int\int_0^1 & Q(y|x)\sqrt{f(x|\theta+hu)}\sqrt{f(x|\theta+hu)}'du \, d\mu(x) \\
= & \int_{A_h}\int_0^1 Q(y|x)\sqrt{f(x|\theta+hu)}\sqrt{f(x|\theta+hu)}'dud\mu(x) \label{reg_term1}\\
& + \int_{A_h^C}\int_0^1 Q(y|x)\sqrt{f(x|\theta+hu)}\sqrt{f(x|\theta+hu)}'dud\mu(x). \label{reg_term2}
\end{align}
To deal with term \eqref{reg_term1} we can use Lebesgue's dominated convergence theorem noting that
\begin{equation}\label{reg_term3}|1_{A_h}(x)Q(y|x)\sqrt{f(x|\theta+hu)}\sqrt{f(x|\theta+hu)}'| \leq 4|Q(y|x)\sqrt{f(x|\theta)}\sqrt{f(x|\theta)}'| \; .
\end{equation}
The right-hand side of display \eqref{reg_term3} is absolutely integrable by the Cauchy-Schwarz inequality:
\begin{align*}
\int |Q(y|x)\sqrt{f(x|\theta)}\sqrt{f(x|\theta)}'|d\mu(x) & \leq \left(\int Q(y|x)^2f(x|\theta)d\mu(x)\right)^\frac{1}{2}\left(\int \frac{f'(x|\theta)^2}{f(x|\theta)}d\mu(x)\right)^\frac{1}{2} \\
& \leq f(y|\theta)\sqrt{e^\varepsilon I_X(\theta)} \; .
\end{align*}
This allows us to switch the limit inside the integral to get
$$\lim_{h\to0} \int_{A_h^C}\int_0^1 Q(y|x)\sqrt{f(x|\theta+hu)}\sqrt{f(x|\theta+hu)}'dud\mu(x) = \int Q(y|x)\sqrt{f(x|\theta)}\sqrt{f(x|\theta)}'d\mu(x)$$
where we have used the continuity of $\sqrt{f(x|\theta)}$ and $\sqrt{f(x|\theta)}'$ to see that $1_{A_h}(x)\to 1$.

It remains to show that term \eqref{reg_term2} approaches zero as $h\to 0$. For this we again use the Cauchy-Schwarz inequality:
\begin{align}
\int_{A_h^C}\int_0^1 & |Q(y|x)\sqrt{f(x|\theta+uh)}\sqrt{f(x|\theta+uh)}'|dud\mu(x) \nonumber\\
& \leq \left(\int\int_0^1 Q(y|x)^2f(x|\theta+uh)dud\mu(x)\right)^\frac{1}{2}\left(\int_{A_h^C}\int_0^1 \frac{f'(x|\theta+uh)^2}{f(x|\theta+uh)}dud\mu(x)\right)^\frac{1}{2} \nonumber\\ \label{reg_term4}
& \leq e^\varepsilon f(y|\theta) \left(\int_{A_h^C}\int_0^1 \frac{f'(x|\theta+uh)^2}{f(x|\theta+uh)}dud\mu(x)\right)^\frac{1}{2} \; .
\end{align}
The term inside the parentheses in \eqref{reg_term4} goes to zero as $h\to 0$ since
\begin{align*}
I_X(\theta) & = \lim_{h\to0}\left( \int_{A_h}\int_0^1 \frac{f'(x|\theta+uh)^2}{f(x|\theta+uh)}dud\mu(x) + \int_{A_h^C}\int_0^1 \frac{f'(x|\theta+uh)^2}{f(x|\theta+uh)}dud\mu(x)\right)
\end{align*}
and
$$\int_{A_h}\int_0^1 \frac{f'(x|\theta+uh)^2}{f(x|\theta+uh)}dud\mu(x) \to I_X(\theta)$$
as $h\to 0$ using the dominated convergence theorem just as above.
\end{proof}
\subsubsection{Applying the van Trees Inequality}
In order to apply the van Trees inequality we will need
$$\int \left(\frac{\partial}{\partial\theta_j}\log f(y|\theta)\right)f(y|\theta)d\nu(y) = \int \frac{\partial}{\partial\theta_j} f(y|\theta)d\nu(y) = 0$$
for each $j=1,\ldots,d$. In this subsection we check this condition under assumptions (i) and (ii) above regarding the distributions $P_\theta$ and their densities $f(x|\theta)$. We will make no further assumptions on $f(y|\theta)$ so that there are no implicit assumptions on the privacy mechanism $Q$ other than it being a regular conditional distribution and an $\varepsilon$-differentially private mechanism.

Using Lemma \ref{lem:reg} and the Fubini-Tonelli Theorem,
\begin{align}
\int \frac{\partial}{\partial\theta_j} f(y|\theta)d\nu(y) = & \int \frac{\partial}{\partial\theta_j} \int Q(y|x)f(x|\theta)d\mu(x)d\nu(y) \nonumber\\
 = & \int \int Q(y|x)\frac{\partial}{\partial\theta_j} f(x|\theta)d\mu(x)d\nu(y) \nonumber\\
 = & \int \int_{\left\{x:\frac{\partial}{\partial\theta_j} f(x|\theta)\geq 0\right\}} Q(y|x)\frac{\partial}{\partial\theta_j} f(x|\theta)d\mu(x)d\nu(y) \nonumber\\
& + \int \int_{\left\{x:\frac{\partial}{\partial\theta_j} f(x|\theta) < 0\right\}} Q(y|x)\frac{\partial}{\partial\theta_j} f(x|\theta)d\mu(x)d\nu(y) \nonumber \\
 = & \int_{\left\{x:\frac{\partial}{\partial\theta_j} f(x|\theta)\geq 0\right\}} \frac{\partial}{\partial\theta_j} f(x|\theta)d\mu(x) + \int_{\left\{x:\frac{\partial}{\partial\theta_j} f(x|\theta) < 0\right\}} \frac{\partial}{\partial\theta_j} f(x|\theta)d\mu(x) \nonumber\\
 = & \frac{\partial}{\partial\theta_j} \int f(x|\theta)d\mu(x)  = 0 \; . \nonumber
\end{align}

\subsection{Blackboard Model} \label{app:blackboard}
The density of the total transcript $Z$ can be written as
\begin{align*}
f(z|\theta) & = \mathbb{E}_{X_1,\ldots,X_n}\left[\prod_{i,t} Q_{i,t}(y_{i,t}| b_1,\ldots,b_{t-1},y_{1,t},\ldots,y_{i-1,t},X_i)\right] \\
& = \prod_{i=1}^n\mathbb{E}_{X_1,\ldots,X_n}\left[\prod_{t} Q_{i,t}(y_{i,t}| b_1,\ldots,b_{t-1},y_{1,t},\ldots,y_{i-1,t},X_i)\right] \\
& = \prod_{i=1}^n\mathbb{E}_{X_i}\left[p_{i,z}(X_i)\right]
\end{align*}
where $$p_{i,z}(x_i) = \prod_{t} Q_{i,t}(y_{i,t}| b_1,\ldots,b_{t-1},y_{1,t},\ldots,y_{i-1,t},x_i) \; .$$
The score for this total transcript has components
$$\frac{\partial}{\partial\theta_j} \log f(z|\theta) = \sum_{i=1}^n \frac{\mathbb{E}_{X_i}\left[S_{\theta_j}(X_i)p_{i,z}(X_i)\right]}{\mathbb{E}_{X_i}\left[p_{i,z}(X_i)\right]} \; .$$
To get the above display we require interchanging differentiation and integration just like in the proof of Lemma \ref{lem1}. The trace of the Fisher information from the whole transcript is thus
\begin{align}
\mathsf{Tr}(I_Z(\theta)) & = \sum_{j=1}^d \mathbb{E}_Z\left[\left(\frac{\partial}{\partial\theta_j}\log f(Z|\theta)\right)^2\right] \nonumber \\
& = \mathbb{E}_Z\left[ \sum_{i,j}\left(\frac{\mathbb{E}_{X_i}\left[S_{\theta_j}(X_i)p_{i,Z}(X_i)\right]}{\mathbb{E}_{X_i}\left[p_{i,Z}(X_i)\right]}\right)^2 \label{eq:bb_term1}\right]
\end{align}
where \eqref{eq:bb_term1} follows because the cross terms
\begin{align*}
\mathbb{E}_Z & \left[ \frac{\mathbb{E}_{X_i} \left[S_{\theta_j}(X_i)p_{i,Z}(X_i)\right]}{\mathbb{E}_{X_i}\left[p_{i,Z}(X_i)\right]}  \frac{\mathbb{E}_{X_k}\left[S_{\theta_k}(X_k)p_{k,Z}(X_k)\right]}{\mathbb{E}_{X_k}\left[p_{k,Z}(X_k)\right]}\right]\\
 & = \mathbb{E}_{X_i,X_k}\left[S_{\theta_j}(X_i)S_{\theta_j}(X_j)\right] = 0
\end{align*}
for $i\neq k$.
\subsubsection{Blackboard Proposition 2}
Let
$$u_Z = \frac{\mathbb{E}_{X_i}\left[S_{\theta}(X_i)\frac{p_{i,Z}(X_i)}{\mathbb{E}_{X_i}\left[p_{i,Z}(X_i)\right]}\right]}{\left\|\mathbb{E}_{X_i}\left[S_{\theta}(X_i)\frac{p_{i,Z}(X_i)}{\mathbb{E}_{X_i}\left[p_{i,Z}(X_i)\right]}\right]\right\|_2} \; .$$
Following from \eqref{eq:bb_term1},
\begin{align*}
\mathsf{Tr}(I_Z(\theta)) & =  \sum_{i=1}^n\mathbb{E}_Z\left[\bigg\langle u_Z, \mathbb{E}_{X_i}\left[S_{\theta}(X_i)\frac{p_{i,Z}(X_i)}{\mathbb{E}_{X_i}\left[p_{i,Z}(X_i)\right]}\right]\bigg\rangle^2\right] \\
& = \sum_{i=1}^n\mathbb{E}_Z\left[\mathbb{E}_{X_i}\left[\big\langle u_Z, S_{\theta}(X_i)\big\rangle\frac{p_{i,Z}(X_i)}{\mathbb{E}_{X_i}\left[p_{i,Z}(X_i)\right]}\right]^2\right] \; .
\end{align*}
We split up the expectation over $X_i$ as follows:
\begin{align*}
\mathbb{E}_{X_i}&\left[\big\langle u_Z, S_{\theta}(X_i)\big\rangle\frac{p_{i,Z}(X_i)}{\mathbb{E}_{X_i}\left[p_{i,Z}(X_i)\right]}\right] \\
\leq & \frac{1}{\min_x p_{i,Z}(x)}\mathbb{E}_{X_i}\left[\big\langle u_Z, S_{\theta}(X_i)\big\rangle p_{i,Z}(X_i)\right] \\
= &  \frac{1}{\min_x p_{i,Z}(x)}\int_{\{x: \langle u_Z,S_\theta(x)\rangle\geq0\}}\big\langle u_Z, S_{\theta}(x)\big\rangle p_{i,Z}(x)f(x|\theta)d\mu(x) \\
& +  \frac{1}{\min_x p_{i,Z}(x)}\int_{\{x: \langle u_Z,S_\theta(x)\rangle<0\}}\big\langle u_Z, S_{\theta}(x)\big\rangle p_{i,Z}(x)f(x|\theta)d\mu(x) \\
 \leq & (e^\varepsilon -1) \int_{\{x: \langle u_Z,S_\theta(x)\rangle\geq0\}}\big\langle u_Z, S_{\theta}(x)\big\rangle f(x|\theta)d\mu(x)
\end{align*}
so that
$$\mathbb{E}_{X_i}\left[\big\langle u_Z, S_{\theta}(X_i)\big\rangle\frac{p_{i,Z}(X_i)}{\mathbb{E}_{X_i}\left[p_{i,Z}(X_i)\right]}\right]^2 \leq I_0(e^\varepsilon-1)^2$$
and
$$\mathsf{Tr}(I_Z(\theta)) \leq nI_0(e^\varepsilon-1)^2$$
as desired.
\subsubsection{Blackboard Proposition 3}
Using Jensen's inequality,
\begin{align*}
\sum_{i=1}^n\mathbb{E}_Z & \left[\mathbb{E}_{X_i}\left[\big\langle u_Z, S_{\theta}(X_i)\big\rangle\frac{p_{i,Z}(X_i)}{\mathbb{E}_{X_i}\left[p_{i,Z}(X_i)\right]}\right]^2\right] \\
& \leq \sum_{i=1}^n\mathbb{E}_Z\left[\mathbb{E}_{X_i}\left[\big\langle u_Z, S_{\theta}(X_i)\big\rangle^2\frac{p_{i,Z}(X_i)}{\mathbb{E}_{X_i}\left[p_{i,Z}(X_i)\right]}\right]\right] \\
& \leq nI_0e^\varepsilon
\end{align*}
where the last step uses \eqref{eq:bb_expansion} and the blackboard differential privacy condition.
\subsubsection{Blackboard Proposition 4} By the convexity of $x\mapsto e^{x^2}$,
\begin{align*}
\exp\left(\left(\frac{\mathbb{E}_{X_i}\left[\langle u_Z, S_{\theta}(X_i)\rangle\frac{p_{i,Z}(X_i)}{\mathbb{E}_{X_i}\left[p_{i,Z}(X_i)\right]}\right]}{\sigma}\right)^2\right) & \leq \mathbb{E}_{X_i}\left[\frac{p_{i,Z}(X_i)}{\mathbb{E}_{X_i}\left[p_{i,Z}(X_i)\right]} \exp\left(\left(\frac{\langle u_Z,S_{\theta}(X_i)\rangle}{\sigma}\right)^2\right) \right] \\
& \leq 2e^\varepsilon \; .
\end{align*}
Taking logs,
$$\left(\mathbb{E}_{X_i}\left[\langle u_Z, S_{\theta}(X_i)\rangle\frac{p_{i,Z}(X_i)}{\mathbb{E}_{X_i}\left[p_{i,Z}(X_i)\right]}\right]\right)^2 \leq \sigma^2(\varepsilon + \log 2) \; .$$
Blackboard Proposition 5 follows in the same way.
\end{appendix}

\end{document}